%% file: main.tex
\documentclass[a4paper,11pt]{article}

\usepackage[T1]{fontenc}
\usepackage{graphicx}
\usepackage{amsmath}
\usepackage{amssymb}
\usepackage{amsthm}
\usepackage{amsfonts}
\usepackage{comment}
\usepackage{xspace}
\usepackage{fullpage}

\usepackage[linesnumbered,noend,ruled,vlined]{algorithm2e}
\usepackage[noend]{algpseudocode}

\SetKwInput{KwInput}{Input}     
\SetKwInput{KwOutput}{Output}   
\SetKw{KwTo}{to}
\SetKwFor{RepTimes}{repeat}{times}{end}
\SetKwProg{Procedure}{Procedure}{}{}

\usepackage{thm-restate}

\usepackage{hyperref}
\usepackage{here}

\usepackage{cite}
\usepackage{enumerate}

\usepackage{color}

\newtheorem{theorem}{Theorem}

\newtheorem{lemma}{Lemma}
\newtheorem{claim}{Claim}
\newtheorem{definition}{Definition}

\newtheorem{corollary}{Corollary}

\newcommand{\coreset}{deterministic reachable kernel\xspace}
\newcommand{\weakcoreset}{reachable kernel\xspace}
\newcommand{\gkgood}{$g(k)$-coverable\xspace}
\newcommand{\gikgood}{$g_i(k)$-coverable\xspace}
\newcommand{\gkgoodf}[1]{#1-coverable\xspace}

\usepackage{xspace}
\newcommand{\I}{\mathcal{I}}
\newcommand{\M}{\mathcal{M}}
\newcommand{\F}{\mathcal{F}}

\global\long\def\span{\mathrm{span}}%
\global\long\def\eps{\varepsilon}%

\begin{document}
\title{Polynomial Kernels with Reachability\\ 
for Weighted $d$-Matroid Intersection\thanks{This work is partially supported by JSPS KAKENHI Grant Numbers 
JP22H05001, 
JP23K21646, 
JP24K02901, 
JP24KJ1494, 
Japan, and JST ERATO Grant Number JPMJER2301, Japan.}}
\author{Chien-Chung Huang\thanks{CNRS, DIENS, PSL, France, \texttt{Chien-Chung.Huang@ens.fr}}
\and 
Naonori Kakimura\thanks{Keio Unversity, Japan,  \texttt{kakimura@math.keio.ac.jp}}
\and 
Yusuke Kobayashi\thanks{Kyoto University, Japan, \texttt{yusuke@kurims.kyoto-u.ac.jp}} 
\and 
Tatsuya Terao\thanks{Kyoto University, Japan, \texttt{ttatsuya@kurims.kyoto-u.ac.jp}}
}
\date{}
\maketitle
\begin{abstract}
This paper studies randomized polynomial kernelization for the weighted $d$-matroid intersection problem.
While the problem is known to have a kernel of size $O(d^{(k - 1)d})$ where $k$ is the solution size,  the existence of a polynomial kernel is not known, except for the cases when either all the given matroids are partition matroids~(i.e., the $d$-dimensional matching problem) or all the given matroids are linearly representable.
The main contribution of this paper is to develop a new kernelization technique for handling general matroids.
We first show that the weighted $d$-matroid intersection problem admits a polynomial kernel when one matroid is arbitrary and the other $d-1$ matroids are partition matroids.
Interestingly, the obtained kernel has size $\tilde{O}(k^d)$, which matches the optimal bound~(up to logarithmic factors) for the $d$-dimensional matching problem.
This approach can be adapted to the case when $d-1$ matroids in the input belong to a more general class of matroids, including graphic, cographic, and transversal matroids.
We also show that the problem has a kernel of pseudo-polynomial size when given $d-1$ matroids are laminar.
Our technique finds a kernel such that any feasible solution of a given instance can reach a better solution in the kernel, which is sufficiently versatile to allow us to design parameterized streaming algorithms and faster EPTASs.
\end{abstract}

\input{introduction.tex}

\input{preliminaries.tex}

\input{partition.tex}

\input{transversal.tex}

\input{laminar.tex}

\input{matroidalmatching.tex}

\input{deterministic.tex}


\appendix

\input{matroidclass.tex}

\end{document}

%% file: introduction.tex
\section{Introduction} \label{sec:introduction}

In the \textit{weighted $d$-matroid intersection} problem, we are given $d$ matroids on a finite ground set $E$ and a weight function $w: E\to \mathbb{R}_{\ge 0}$.
The goal of the problem is to find a maximum-weight common independent set, that is, we aim to find a subset $X$ of $E$  that maximizes $\sum_{e\in X}w(e)$ subject to $X$ is an independent set for every matroid.
The weighted $d$-matroid intersection problem is one of the most fundamental problems in combinatorial optimization, as it includes the bipartite matching problem, the arborescence problem~(when $d=2$), and the $d$-dimensional matching problem as a special case.
The problem can be solved in polynomial time when $d\leq 2$~\cite{edmonds1968matroid, edmonds1979matroid, lawler1975matroid, aigner1971matching, iri1976algorithm}, while it is NP-hard when $d\geq 3$~(see~\cite{garey1979}).

In this paper, we study kernelization of the weighted $d$-matroid intersection problem when the solution size is a parameter.
In general, \textit{kernelization} for a parameterized problem is a polynomial-time transformation that reduces any given instance to an equivalent instance of the same problem, called a \textit{kernel}, with size and parameter bounded by a function of the parameter in the input.
Kernelization plays a central role in parameterized complexity theory~\cite{fomin2019kernelization}, as kernelization can be used as a preprocessing in fixed-parameter algorithms, and moreover, the existence of a kernel is equivalent to fixed-parameter tractability of a given parameterized problem.

A kernel is \textit{polynomial} if its size is polynomially bounded by the parameter of the input.
There is a line of research in parameterized complexity theory that focuses on classifying fixed-parameter tractable problems on whether they admit polynomial kernels or not. 
To show the existence of a polynomial kernel or to reduce the kernel size, a variety of sophisticated techniques have been developed~(see e.g., \cite{kratsch2014recent, lokshtanov2012kernelization, fomin2014kernelization, guo2007invitation}).
Also, various complexity-theoretic lower bound tools for kernelization have been established~\cite{bodlaender2009problems, dell2012kernelization, dell2014satisfiability, drucker2015new, fortnow2011infeasibility, hermelin2012weak, kenyon2007rank}.
See also the comprehensive textbook by Fomin et al.~\cite{fomin2019kernelization}.

Recently, Huang and Ward~\cite{huang2023fpt} showed that the weighted $d$-matroid intersection problem admits a kernel of size $O(d^{(k - 1)d})$, 
where the parameter $k$ is the solution size.
It is, however, open whether it has a polynomial kernel or not when $d$ is a fixed constant.
There are only a few special cases admitting polynomial-size kernels in the literature as below.

One of specific special cases is the $d$-dimensional matching problem, which is the $d$-matroid intersection problem in which all matroids are simple partition matroids.
The problem is a special case of the $d$-matching problem~(also known as the $d$-set packing problem), and the $d$-matching problem is known to admit a kernel of size $O(k^d)$.
Thus, the $d$-dimensional matching problem has a kernel of size $O(k^d)$.
The proof makes use of the sunflower lemma~(see~\cite[Chapter 8]{fomin2019kernelization}).
Dell and Marx~\cite{dell2012kernelization} showed that the kernel size $O(k^d)$ is asymptotically tight even for the $d$-dimensional matching problem.

Another special case is when given matroids are linearly representable.
It is known that this case has a randomized FPT algorithm parameterized by $d$ and the solution size $k$~(see~\cite[Chapter 12.3.4]{cygan2015parameterized}), which yields a kernel consisting of $O(k^d)$ elements of the ground set.
The algorithm is based on the representative sets technique (see~\cite[Chapter 11]{fomin2019kernelization}), specifically designed for linear matroids.
In recent years, representative sets have become one of the most widely-studied tools in the parameterized algorithms community, which can effectively deal with linear matroids~\cite{marx2009parameterized,fomin2016efficient,kratsch2020representative, wahlstrom2024representative,fomin2014representative}.

In the above mentioned spacial cases, a polynomial kernel is obtained via well-established kernelization techniques such as the sunflower lemma and representative sets.
However, the existing techniques rely on how given matroids are represented.
In fact, they fail to handle the case when even one matroid is given by an independence oracle.

\subsection{Our Contribution} \label{subsec:ourcontribution}

Our main contribution is to develop a new kernelization technique for handling general matroids.
Using this technique, we design kernelization algorithms for a broader class of the weighted $d$-matroid intersection problem, 
where one matroid is arbitrary and given by an independence oracle, while the remaining $d-1$ matroids are ``simple'' matroids such as 
partition, graphic, cographic, transversal, or laminar matroids.
The matroid classes discussed in this paper are summarized in Appendix~\ref{sec:matroidclass}.
Throughout the paper, $d$ is regarded as a fixed constant. 

The kernel that our algorithm finds satisfies an exchange property for matroids, which we call a \textit{reachable kernel}.
Intuitively, a reachable kernel satisfies the condition that any feasible solution of a given instance can reach a better solution in the kernel by exchanging elements one by one with constant probability.

Below is the formal definition of a reachable kernel.
\begin{definition}\label{def:weakcoreset}
Let $E$ be a finite set and $\mathcal{F} \subseteq 2^E$ be 
a family of feasible sets. 
Let $w\colon E \to \mathbb{R}_{\ge 0}$ be a non-negative weight function
and let $k$ be a positive integer. 
We say that a randomly selected set $R \subseteq E$ is a {\em \weakcoreset with respect to $\mathcal{F}$, $w$, and $k$}
if the following condition holds:
\begin{quote}
for any $\{x_1, \dots , x_t\} \in \mathcal{F}$ with $t \le k$, 
there exist elements $y_1, \dots ,y_t \in R$ such that 
$w(y_i) \ge w(x_i)$ and $\{y_1, \dots , y_{i}, x_{i+1}, \dots , x_t\}$ is a set of size $t$ in $\mathcal{F}$
for any $i \in \{1,2,\dots,t\}$ with probability at least $\frac{2}{3}$.\footnote{The value $\frac{1}{3}$ of the error probability is 
not essential. Indeed, by computing a \weakcoreset $c$ times and taking their union, 
we can reduce the error probability to less than $(1/3)^c$, increasing the output size by a factor of $c$.}
\end{quote}
When $\mathcal{F}$, $w$, and $k$ are clear, we simply call it a {\em \weakcoreset}. 
\end{definition}
For the problem of finding a set $X \subseteq E$ of maximum weight subject to $X \in \F$, 
by the definition, an instance restricted to the \weakcoreset $R$ has the same optimal value as the given instance when the solution size is at most $k$.
In this sense, the set $R$ can be regarded as a~(randomized) kernel under the assumption that the membership oracle for $\F$ is available and it does not contribute to the input size\footnote{The assumption is standard in the literature of kernelization for matroid-constrained problems~\cite{banik_et_al:LIPIcs.ESA.2024.17}.}.

As we will discuss later in Section~\ref{para:anotherapplication}, 
reachable kernels not only guarantee a reduced input instance for the purposes of FPT, 
but they also find additional applications in other computational models. 
In the following, we discuss the size of a \weakcoreset that can be guaranteed in various contexts.  

We first consider the case when we are given one arbitrary matroid and $d-1$ simple partition matroids.
Here, a matroid $\mathcal{M} = (E, \mathcal{I})$ is called a {\em simple partition matroid} if 
there exists a partition $(E^1, \ldots, E^{\ell})$ of $E$ such that $I \in \I$ 
if and only if $|I \cap E^j| \le 1$ for any $j \in [\ell]$.

\begin{theorem}\label{thm:partition}
For the weighted $d$-matroid intersection problem such that, out of the $d$ given matroids, $d-1$ are simple partition matroids, we can construct a \weakcoreset of size\footnote{The notation $\tilde{O}$ hides a polylogarithmic function of $k$.} $\tilde{O}(k^d)$ in polynomial time.
\end{theorem}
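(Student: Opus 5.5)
One matroid $\M_1=(E,\I_1)$ is arbitrary and $\M_2,\dots,\M_d$ are simple partition matroids. Each element $e$ therefore has a signature $(p_2(e),\dots,p_d(e))$, where $p_j(e)$ is the block of $\M_j$ containing $e$. A set is a common independent set exactly when it is independent in $\M_1$ and its elements pairwise disagree in every coordinate $j\ge 2$.

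The plan is a randomized color-coding reduction, followed by a greedy matroid-based selection inside each color class. For each $j\in\{2,\dots,d\}$ we pick a random function $h_j$ mapping the blocks of $\M_j$ to $[ck]$ for a suitable constant $c$. An element $e$ then gets the color vector $h(e)=(h_2(p_2(e)),\dots,h_d(p_d(e)))\in[ck]^{d-1}$. Within each of the $(ck)^{d-1}$ color classes we sort elements by nonincreasing weight and greedily keep a maximal independent set of $\M_1$ of size at most $k$. That is, we keep the first $k$ elements, in weight order, that extend the current kept set independently in $\M_1$ (a truncated greedy basis). The kernel $R$ is the union of these sets, so $|R|\le k\cdot (ck)^{d-1}=O(k^d)$. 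We repeat the whole construction $O(\log k)$ times and take the union; this accounts for the $\tilde O$.

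Fix a feasible $X=\{x_1,\dots,x_t\}$. Call the coloring good for $X$ if, for each $j$, the $t$ blocks $p_j(x_i)$ receive distinct colors under $h_j$. With range $ck$ this happens with constant probability for each $j$, hence with constant probability overall since $d$ is constant.

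On a good coloring we build $y_1,\dots,y_t$ sequentially. Consider the class $C$ of $x_i$'s color vector. The elements of $C$ that are feasible to swap in for $x_i$ must avoid the blocks used by $x_{i+1},\dots,x_t$ and by $y_1,\dots,y_{i-1}$ in every partition matroid. Elements of $C$ share no color coordinate with the other $x_\ell$, since the colors are distinct. They can, however, collide with blocks of the $y_\ell$, whose colors equal those of $x_\ell$. But $y_\ell$ lies in the class of $x_\ell$, which differs from $C$ in every coordinate, so it cannot share a block with an element of $C$ either. Hence every element of $C$ is compatible with all partition matroids against $Z_i=\{y_1,\dots,y_{i-1},x_{i+1},\dots,x_t\}$. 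Note this needs goodness in every coordinate: the class of $x_\ell$ differs from $C$ in each coordinate $j$ precisely because the colors $h_j$ are distinct on $X$.

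It remains to handle $\M_1$. We need $y\in R\cap C$ with $w(y)\ge w(x_i)$ and $Z_i+y\in\I_1$. If $x_i\in R$, take $y_i=x_i$. Otherwise $x_i$ was rejected by the truncated greedy on $C$. Let $G$ be the greedy elements of $C$ with weight at least $w(x_i)$. Either $x_i\in\span_{\M_1}(G)$, or the greedy stopped at $k$ elements all of weight at least $w(x_i)$. In the second case $|G|=k>|Z_i|$, so the augmentation axiom gives $y\in G$ with $Z_i+y\in\I_1$. In the first case, suppose no $y\in G$ extends $Z_i$ independently. Then $G\subseteq\span(Z_i)$, so $x_i\in\span(Z_i)$, contradicting $Z_i+x_i\in\I_1$. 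Here we use that $Z_i+x_i$ is independent, which holds inductively because $\{y_1,\dots,y_{i-1},x_i,\dots,x_t\}$ is feasible. Distinctness of the $y$'s and the size $t$ follow because $y\notin Z_i$, as $Z_i+y$ is independent of size $t$.

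The main obstacle is the cross-interaction between replaced elements and the remaining elements in the partition matroids. The argument that classes of distinct $x$'s are coordinatewise disjoint resolves it, but it demands perfect hashing in all $d-1$ coordinates simultaneously. I would double-check that a range of $ck$ per coordinate suffices to get constant success probability. If only $\Omega(1/e^{k})$ were achievable, I would instead use range $k^2$ per coordinate, which gives pairwise distinctness with constant probability but a kernel of size $O(k^{2d-1})$. This is the step that must be tuned to reach $\tilde O(k^d)$.

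A fix I would try first: choose colors from $[k^2]$ per coordinate but keep only $O(\log k)$ elements per class rather than $k$. The augmentation argument then needs only that few elements of $Z_i$ land in any single class's span. I expect this refinement to be where the polylog factor enters.
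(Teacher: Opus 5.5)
There is a genuine gap, and it is the step you flagged yourself. A uniformly random $h_j$ colors the $t$ distinct blocks $p_j(x_1),\dots,p_j(x_t)$ injectively with probability $\prod_{s<t}\bigl(1-\frac{s}{ck}\bigr)\le \exp\bigl(-\frac{t(t-1)}{2ck}\bigr)$. For $t=k$ this is at most $e^{-(k-1)/(2c)}$ (birthday paradox). So a good coloring is not a constant-probability event, and you would need exponentially many repetitions.

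Neither fallback recovers $\tilde O(k^d)$:
\begin{itemize}
\item Range $k^2$ per coordinate gives size $k\cdot k^{2(d-1)}=k^{2d-1}$.
\item Capping the per-class greedy at $O(\log k)$ elements breaks your $\mathcal{M}_1$ argument. A rejected $x_i$ may now be witnessed only by $|G|=O(\log k)<|Z_i|$. Heavier elements of the class that lie in $\mathrm{span}_{\mathcal{M}_1}(Z_i)$ can fill all the slots, since the colors carry no information about $\mathcal{M}_1$. Then nothing in $R\cap C$ can replace $x_i$.
\item Even if the cap worked, $k^{2(d-1)}\log k$ exceeds $k^d$ once $d\ge 3$.
\end{itemize}

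The missing idea is that you never need to separate all of $X$ at once; it suffices to handle one exchange at a time, which the paper does via (SingleEXC). For a fixed feasible $X$ and $x\in X$, each partition matroid only needs the block of $x$ selected while the at most $k-1$ blocks of $X-x$ are not. The selected region is then $(X,x)$-exchangeable, and your greedy case analysis (the paper's lemmas on \texttt{Greedy} and exchangeable sets) produces $y$.

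The paper selects each block independently with probability $1/k$. This succeeds with probability at least $\frac1k(1-\frac1k)^{k-1}\ge\frac1{ek}$ per matroid. It runs $T=\lceil (ek)^{d-1}\log(3k)\rceil$ rounds of \texttt{Greedy}, each adding at most $k$ elements, so every single exchange fails with probability at most $\frac{1}{3k}$. Applying this to the current sets $\{y_1,\dots,y_{i-1},x_i,\dots,x_t\}$ and taking a union bound over the $t\le k$ steps gives success probability $\frac23$ and size $O(e^{d-1}k^d\log k)$.

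Your own construction also works under this per-exchange analysis. The color of $p_j(x)$ avoids the at most $k-1$ colors of the blocks of $X-x$ with probability at least $1-\frac{k-1}{ck}\ge 1-\frac1c$. So the class of $x$ is exchangeable in all $d-1$ partition matroids with constant probability, and $O(\log k)$ repetitions give $\tilde O(k^d)$. What breaks your proof is only the demand for perfect hashing of all of $X$, which you imposed in order to run the whole exchange sequence inside a single coloring.
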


Theorem~\ref{thm:partition} implies that the $d$-dimensional matching problem admits a kernel of size $\tilde{O}(k^d)$, as the $d$-dimensional matching problem is identical with the weighted $d$-matroid intersection problem when all the given matroids are simple partition matroids.
As mentioned before, the optimal kernel size of the $d$-dimensional matching problem is $O(k^d)$.
Thus, our result matches the optimal bound, ignoring a logarithmic factor.

Theorem~\ref{thm:partition} can be adapted to more generalized matroids.
To state our result, we introduce a notion of \gkgood matroids, which is of independent interest.
Here, $\mathrm{span}_{\M}(X)$ of a matroid $\M$ for a subset $X\subseteq E$ denotes the set of elements $e \in E$ such that adding $e$ to $X$ does not increase its rank.

\begin{definition}
 For a non-decreasing function $g:\mathbb{Z}_{\ge 0}\to \mathbb{Z}_{\ge 0}$,  
 a matroid $\M = (E, \I)$ is called {\em \gkgood} if it satisfies the following condition:
 for any $X \in \I$ with $|X| \leq k$, there exists $F \subseteq E$ with $|F| \leq g(k)$ such that $\mathrm{span}_{\M}(X) = \bigcup_{f \in F} \mathrm{span}_{\M}(f)$.
\end{definition}

Roughly speaking, this definition means that any independent set $X$ of size at most $k$ can essentially span at most $g(k)$ elements, when we identify ``parallel'' elements with one element.
For example, a simple partition matroid is $k$-coverable, while a uniform matroid of rank $k$ on a large ground set is not \gkgood for any function $g$.

\begin{theorem} \label{thm:gkgood}
For the weighted $d$-matroid intersection problem with $d$ matroids $\M_0, \M_1, \dots, \M_{d-1}$ where $\M_i$ is \gikgood for each $i \in \{1,\dots, d-1\}$, 
we can construct a \weakcoreset of size $\tilde{O} \left( k \cdot \prod_i g_i(k) \right)$ in polynomial time.
\end{theorem}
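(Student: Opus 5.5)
We follow the strategy used for Theorem~\ref{thm:partition}, replacing the parts of a partition matroid by parallel classes of the coverable matroids. For $i \in \{1,\dots,d-1\}$, say two non-loop elements $e,f$ are \emph{$\M_i$-parallel} if $\span_{\M_i}(e)=\span_{\M_i}(f)$; the parallel classes partition the non-loops of $\M_i$, and loops can be discarded. Define the random kernel as follows. For each $i\in\{1,\dots,d-1\}$, independently hash the parallel classes of $\M_i$ into colours: each class gets a uniformly random colour in $[c_i]$, with $c_i = \Theta(g_i(k))$. Each element $e$ then receives a colour vector $\chi(e)\in\prod_i [c_i]$. For every colour vector $\gamma$, run the greedy algorithm for $\M_0$ on $E_\gamma=\chi^{-1}(\gamma)$: sort by weight and keep an element if it is independent of the previously kept ones in $\M_0$. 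Truncate at $k$ elements, i.e., take a max-weight basis of the truncation of $\M_0|E_\gamma$ to rank $k$. Let $R$ be the union of these sets over all $\gamma$, repeated $O(\log k)$ (or $\mathrm{poly}\log k$) times independently. Then $|R| = \tilde O(k\prod_i g_i(k))$, and the construction runs in polynomial time with oracle access to $\M_0$. Computing parallel classes needs only rank oracles on pairs.

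\textbf{Step 1 (a good colouring exists with constant probability).} Fix a feasible $X=\{x_1,\dots,x_t\}$. For each $i$, $X$ is independent in $\M_i$ and $|X|\le k$, so by $g_i(k)$-coverability there is $F_i$ with $|F_i|\le g_i(k)$ and $\span_{\M_i}(X)=\bigcup_{f\in F_i}\span_{\M_i}(f)$. Hence $\span_{\M_i}(X)$ is a union of at most $g_i(k)$ parallel classes. Call the colouring \emph{good for $X$ in $\M_i$} if every parallel class inside $\span_{\M_i}(X)$ whose colour equals that of the class of some $x_j$ is exactly that class. Equivalently, the classes meeting $X$ receive colours distinct from all other classes in $\span_{\M_i}(X)$ and from each other. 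A birthday-type bound with $c_i=\Theta(g_i(k))$ does not give probability near $1$. I would therefore either take $c_i = \Theta(g_i(k)\cdot k)$ and absorb the loss, or use colour-coding with $c_i = g_i(k)$ classes and $e^{-O(g_i(k))}$ success, which is too weak. This trade-off is the main obstacle. My first attempt is to only require that, for each $x_j$, its class is the unique class of its colour among the $\le g_i(k)$ classes of $\span_{\M_i}(X)$. With $c_i = 2k\,g_i(k)/g_i(k)$ this fails, so I would instead use a ``layered'' argument as in the partition case. There the bound $\tilde O(k^d)$ presumably comes from processing $x_1,\dots,x_t$ sequentially and needing, at step $j$, only that the current colour cell avoids $\le k$ conflicting parts, which holds with probability $1-O(1/\log k)$ per coordinate after $O(\log k)$ repetitions. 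I will mirror whatever the partition proof does, replacing ``part containing $x$'' by ``parallel class containing $x$'' and ``$\le k$ parts touched'' by ``$\le g_i(k)$ classes spanned''.

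\textbf{Step 2 (exchange).} Assume the colouring is good. Process $i=1,\dots,t$ in turn. Let $S=\{y_1,\dots,y_{i-1},x_{i+1},\dots,x_t\}$, which has size $t-1$ and is independent in all matroids by induction. Let $B$ be the greedy set in the cell of $\chi(x_i)$. The elements of $B$ of weight $\ge w(x_i)$ span $x_i$ in $\M_0$, or $B$ has $k$ elements of weight $\ge w(x_i)$ by the greedy and truncation properties. In either case, at least $k > |S|$ candidates $y\in B$ with $w(y)\ge w(x_i)$ have $S+y$ independent in $\M_0$, by exchange against the rank of $S$. For $\M_i$, $i\ge1$, since $y$ has the same colour as $x_i$, goodness gives that $y$ is either parallel to $x_i$ or outside $\span_{\M_i}(X)$. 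We must ensure that $y\notin\span_{\M_i}(S)$, where $S$ contains earlier $y$'s. This is handled by choosing $y$ outside $\span_{\M_i}(S)\setminus\span_{\M_i}(x_i)$. So the colouring must separate the cell of $x_i$ from the $\le g_i(k)$ classes spanned by the current $S$, which is again a coverable set of size $\le k$. The randomness must survive this adaptivity. I would handle that by a union bound over the $k$ steps, or by requiring separation from classes spanned by $X\cup R$-prefixes via the fixed greedy structure, which I expect to be the delicate point.
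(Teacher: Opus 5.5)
\textbf{There is a genuine gap: the proposal does not prove its one essential step.}

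Step 1 ends by conceding that your notion of a colouring ``good for $X$'' has only exponentially small probability when $c_i=\Theta(g_i(k))$. (That notion requires the classes of all $x_j$ to be separated from each other and from the other classes of $\mathrm{span}_{\M_i}(X)$.) You then defer to ``whatever the partition proof does'', so no probability bound is actually established.

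Step 2 cannot compensate. The claim that at least $k>|S|$ elements $y$ of the greedy set with $w(y)\ge w(x_i)$ satisfy $S+y\in\I_0$ is false. If the heavy part of the greedy output spans $x_i$, or if it has exactly $k$ elements and $|S|=k-1$, there may be a single such $y$. So you have no freedom to pick $y$ outside $\mathrm{span}_{\M_i}(S)$ for all $i$, and the only $\M_0$-valid candidate may lie in $\mathrm{span}_{\M_i}(S)$. Separation from $\mathrm{span}_{\M_i}(S)$ must be guaranteed by the random set on which greedy is run, not by the choice of $y$. That is exactly what you leave unproved.

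The missing idea is the reduction to (SingleEXC). Fix $X$ and a single $x\in X$. Ask only that the random set containing $x$ avoid the at most $g_i(k)$ parallel classes whose union is $\mathrm{span}_{\M_i}(X-x)$. The class of $x$ is not among them, since $x\notin\mathrm{span}_{\M_i}(X-x)$. Nothing is required of the classes of the other $x_j$, so there is no birthday problem. The resulting set is $(X,x)$-exchangeable for every $\M_i$, and Lemma~\ref{lem:exchangeablesetgreedy} gives $y$ directly from the greedy output, with no choice involved.

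The paper proceeds as follows:
\begin{enumerate}
\item It samples each class representative with probability $1/g_i(k)$, so a round succeeds with probability at least $\prod_i\frac{1}{4g_i(k)}$.
\item It takes $\lceil 4^{d-1}\log(3k)\prod_i g_i(k)\rceil$ rounds, pushing the failure probability below $\frac{1}{3k}$.
\item It builds $y_1,\dots,y_t$ by applying (SingleEXC) at step $i$ to $\{y_1,\dots,y_{i-1},x_i,\dots,x_t\}$, with a union bound over the $t\le k$ steps.
\end{enumerate}
The last step is precisely the union bound you mention; the paper does not treat the dependence of these sets on $R$ any further.

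Your colouring construction itself would go through with this analysis. Take $c_i=\max\{2,g_i(k)\}$. The cell $\chi^{-1}(\chi(x))$ contains $x$, and, independently for each $i$, it misses all classes of $\mathrm{span}_{\M_i}(X-x)$ with probability at least $(1-1/c_i)^{g_i(k)}\ge 1/4$. So each repetition succeeds with constant probability $4^{-(d-1)}$, and $O(\log k)$ repetitions suffice. This colour-coding variant trades the paper's $\Theta(\log k\prod_i g_i(k))$ single-greedy rounds for $O(\log k)$ rounds of at most $\min\{|E|,\prod_i c_i\}$ greedy calls each. It gives the same $\tilde O(k\prod_i g_i(k))$ bound.
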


Since a simple partition matroid is $k$-coverable, Theorem~\ref{thm:gkgood} is a generalization of Theorem~\ref{thm:partition}. 
Furthermore, since graphic matroids are \gkgoodf{$O(k^2)$} and cographic matroids are \gkgoodf{$O(k)$} as we will see in Section~\ref{sec:analysisgk}, 
Theorem~\ref{thm:gkgood} implies the following corollary.

\begin{corollary} \label{cor:graphic}
For the weighted $d$-matroid intersection problem such that, out of the $d$ given matroids, all except one are graphic (resp.~cographic), 
we can construct a \weakcoreset of size $\tilde{O}(k^{2d-1})$ (resp.~$\tilde{O}(k^{d})$) in polynomial time.
\end{corollary}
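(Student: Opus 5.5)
The corollary follows from Theorem~\ref{thm:gkgood} once we show that a graphic matroid is \gkgoodf{$O(k^2)$} and a cographic matroid is \gkgoodf{$O(k)$}. Indeed, with $\M_1,\dots,\M_{d-1}$ all graphic, $g_i(k)=O(k^2)$ gives $\tilde{O}(k\cdot (k^2)^{d-1})=\tilde{O}(k^{2d-1})$. With all cographic, $g_i(k)=O(k)$ gives $\tilde{O}(k\cdot k^{d-1})=\tilde{O}(k^d)$. So the plan is to verify these two coverability bounds and then invoke Theorem~\ref{thm:gkgood}.

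\emph{Graphic case.} Let $\M$ be the cycle matroid of a graph $G=(V,E)$, and let $X$ be a forest with $|X|\le k$. Let $V_X$ be the set of vertices incident to $X$, so $|V_X|\le 2k$. An edge $e$ lies in $\span_{\M}(X)$ exactly when $e$ is a loop or both ends of $e$ lie in the same component of $X$; in particular both ends lie in $V_X$. For each unordered pair $\{u,v\}\subseteq V_X$ in a common component of $X$ that is joined by at least one edge of $G$, choose one such edge $f_{uv}$. Let $F$ be the set of these edges together with one loop if $G$ has any. Every non-loop edge of $\span_{\M}(X)$ is parallel to some $f_{uv}$, so it lies in $\span_{\M}(f_{uv})$. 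All loops lie in $\span_{\M}(f)$ for every $f$. Conversely, each $\span_{\M}(f)\subseteq\span_{\M}(X)$ because $f\in\span_{\M}(X)$. This gives $|F|\le\binom{2k}{2}+1=O(k^2)$. If $G$ has no edge in the span, the same argument works with an empty or single-loop $F$. Note that for $X=\emptyset$ the span consists of the loops, and $F$ is a single loop or empty.

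\emph{Cographic case.} Let $\M^*$ be the bond matroid of $G$, and let $X$ be independent in $\M^*$ with $|X|\le k$, i.e.\ $G-X$ has the same components as $G$. Write $r^*(Y)=|Y|-c(G-Y)+c(G)$, where $c(\cdot)$ counts connected components. An edge $e\notin X$ is in $\span_{\M^*}(X)$ iff $X\cup\{e\}$ is dependent in $\M^*$, i.e.\ iff $e$ is a bridge of $G-X$. The step I expect to be the main obstacle is bounding the number of parallel classes of such bridges. Let $H=G-X$, and consider the bridges of $H$ that are not coloops of $\M^*$ (loops of $\M^*$ are the bridges of $G$, and they are handled like loops above). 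Each such bridge $e$ lies in some cocircuit contained in $X\cup\{e\}$. Deleting all bridges of $H$ yields 2-edge-connected blocks, and contracting these blocks gives a forest $T$ whose edges are the bridges of $H$. Since $X$ reconnects $G$ and every bridge of $H$ must be crossed by some edge of $X$, each leaf of $T$ is incident to an endpoint of an edge of $X$. So $T$ has at most $2k$ leaves, and hence at most $2k$ branch vertices. Two bridges on the same path between consecutive leaves or branch vertices of $T$ are crossed by exactly the same edges of $X$, which makes them a series pair in $G-(X\setminus\{\cdot\})$. They are then parallel in $\M^*$, i.e.\ each lies in the other's span, since $\{e,e'\}$ is a cocircuit after contracting the others. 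Therefore the bridges fall into at most $O(k)$ parallel classes, corresponding to the $O(k)$ maximal paths of $T$. Adding one representative for each element of $X$ and one loop gives $|F|=O(k)$. The delicate point is checking that equal crossing-sets really implies $\M^*$-parallelism within $\span(X)$, i.e.\ $\span_{\M^*}(e)\ni e'$. This holds because $\{e,e'\}$ is a 2-edge cut of $G$ with respect to the relevant component, so it is a cocircuit of $G$ or contains a smaller one.

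Finally, apply Theorem~\ref{thm:gkgood} with these functions $g_i$.
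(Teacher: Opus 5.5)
\textbf{Verdict: genuine gap in the cographic case.} Your reduction to Theorem~\ref{thm:gkgood} and your graphic-case argument are fine and match the paper.

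\textbf{Main gap: the path decomposition of $T$.} You claim that two bridges of $H=G-X$ on the same path of $T$ between consecutive leaves or branch vertices are crossed by exactly the same edges of $X$. This is false whenever an internal node of that path is an endpoint of an edge of $X$. For example, let $G$ have vertices $a,b,c,d$ and edges $ab,bc,cd,ad,bd$, and let $X=\{ad,bd\}$.
\begin{itemize}
\item $G-X$ is the path $a\,b\,c\,d$, so $X$ is independent, and $T$ is this path with leaves $a,d$ and no branch vertex.
\item $ab$ is crossed only by $ad$, while $bc$ is crossed by both $ad$ and $bd$.
\item $G-\{ab,bc\}$ is connected, so $ab$ and $bc$ are not parallel in $\mathcal{M}^*$.
\end{itemize}
If you double every edge of $X$, no single element of $X$ spans any bridge. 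Then $X$ plus one representative of the path fails to cover $\mathrm{span}_{\mathcal{M}^*}(X)=E$, since $\mathrm{span}(ab)=\{ab\}$ and $\mathrm{span}(bc)=\mathrm{span}(cd)=\{bc,cd\}$. So your $F$ is simply wrong, not just under-justified.

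\textbf{The fix (what the paper does).} Also cut the paths of $T$ at every node that contains an endpoint of an edge of $X$; this is the set $U$, with $|U|\le 2k$. Leaves lie in $U$ and there are at most $|U|$ branch vertices, so there are still $O(k)$ maximal paths. Now, for two bridges $e,e'$ on one path, the nodes strictly between them have degree $2$ in $T$ and carry no $X$-edge. Hence $\{e,e'\}$ is an edge cut of $G$, and $e,e'$ are parallel. Your ``equal crossing sets $\Rightarrow$ parallel'' argument is correct; its premise only holds with these extra breakpoints.

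\textbf{Smaller problem: bridges of $G$.} You build $T$ from all bridges of $H$, including the bridges of $G$ (the loops of $\mathcal{M}^*$). Your leaf bound, however, relies on every bridge of $H$ being crossed by some edge of $X$, which fails for bridges of $G$. For instance, a cycle with many pendant edges and $|X|=1$ gives arbitrarily many leaves, and hence branch vertices, of $T$ that never touch $X$. You must contract the bridges of $G$ before forming $T$, which amounts to deleting the loops of $\mathcal{M}^*$. The paper does this by first making $G$ connected and $2$-edge-connected. Only afterwards should you add one loop to $F$ to cover them.
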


Although graphic and cographic matroids are $g(k)$-coverable, it may be somehow unexpected that relatively mild generalizations of simple partition matroids, namely, 
transversal and laminar matroids are not (since they both generalize uniform matroids). To obtain the following results, significantly different 
techniques are required.

\begin{restatable}{theorem}{transversal}\label{thm:transversal}
For the weighted $d$-matroid intersection problem such that, out of the $d$ given matroids, all except one are transversal,
we can construct a \weakcoreset of size $\tilde{O}(k^{d})$ in polynomial time.    
\end{restatable}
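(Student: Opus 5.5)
We build the reachable kernel by reducing the transversal case to a setting much like the partition-matroid case of Theorem~\ref{thm:partition}. A transversal matroid $\M_i$ is given by a bipartite graph $G_i=(E\cup V_i, A_i)$: a set $X$ is independent iff it can be matched into $V_i$.

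\textbf{Random colouring of the right-hand vertices.} Uniform transversal matroids are not $g(k)$-coverable, so Theorem~\ref{thm:gkgood} cannot be applied directly. Instead, for each $i\in\{1,\dots,d-1\}$ we colour $V_i$ uniformly at random with $k$ colours. We then define a partition-like constraint: each element $e$ chooses one colour class $c$, and may be used only via a neighbour of colour $c$.

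Fix a feasible $\{x_1,\dots,x_t\}$ together with its matchings $\mu_i$ in each $G_i$. With probability at least $e^{-k}$ per matroid, the $t$ matched vertices $\mu_i(x_j)$ receive distinct colours. This probability is far too small, so colour coding as such is not the way to go. We instead use a \emph{hashing into $O(k^2)$ buckets} (or $O(k)$ buckets with $\log k$ repetitions) so that the matched vertices are distinct with constant probability. With this, each transversal constraint is replaced, for the fixed solution, by the condition ``at most one element per bucket, and the element must have a neighbour in that bucket''.

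\textbf{Using Theorem~\ref{thm:partition}'s construction.} After bucketing, the instance resembles a partition-matroid instance in which each element $e$ is split into copies $(e,b_1,\dots,b_{d-1})$, one for each choice of buckets $b_i$ adjacent to $e$. We apply the greedy/exchange construction underlying Theorem~\ref{thm:partition}. Roughly, for each tuple of buckets we keep the heaviest elements that stay independent in $\M_0$ (a maximum-weight basis of $\M_0$ restricted to that class, truncated at size $k$).

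Reachability then follows the argument of the partition case. Replace $x_j$ one at a time by a heaviest kept element $y_j$ in the same bucket tuple that keeps $\{y_1,\dots,y_j,x_{j+1},\dots,x_t\}$ independent in $\M_0$; the matroid exchange property guarantees such a $y_j$ exists among $k$ kept elements. Since $y_j$ uses the same buckets $b_i$, the new set remains matchable in each $G_i$.

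\textbf{Size bound.} The number of bucket tuples is $k^{d-1}$, and we keep $k$ per tuple, giving $k^d$. Repetitions for the probability amplification contribute polylog factors, for a total of $\tilde O(k^d)$.

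\textbf{Main obstacle.} The delicate step is the replacement. The element $y_j$ must be matched to the \emph{same} vertex $\mu_i(x_j)$, not merely to some vertex in the same bucket, or else the matching of the other elements may break. To handle this, I would make the buckets fine enough and argue that within a bucket, only the vertex $\mu_i(x_j)$ is relevant with constant probability. If that fails, the fallback is to require $y_j$ to be adjacent to all vertices of the bucket that are used, and to use augmenting-path arguments in $G_i$.
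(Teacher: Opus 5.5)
\textbf{The gap is the size and probability accounting of the bucketing.} Your replacement argument needs the matched vertices $\mu_i(x_1),\dots,\mu_i(x_t)$ to lie in pairwise distinct buckets of $G_i$. By the birthday bound, this has constant probability only with $\Omega(k^2)$ buckets. That gives $\Omega(k^{2(d-1)})$ bucket tuples and a kernel of size $\tilde O(k^{2d-1})$, not $\tilde O(k^d)$.

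Your count of $k^{d-1}$ tuples presupposes $k$ buckets. With $k$ buckets, injectivity has probability $e^{-\Theta(k)}$, which is exactly the colour-coding problem you rejected. ``$O(k)$ buckets with $\log k$ repetitions'' does not repair this: with $ck$ buckets and $t=k$ the probability is $\prod_{j<t}(1-j/(ck))=e^{-\Theta(k)}$.

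Conversely, the step you call the main obstacle is not one. If the buckets used by the current set are pairwise distinct, every other element is matched inside its own bucket. So $y_j$ may use any neighbour in $x_j$'s bucket, and the invariant is preserved, as your reachability paragraph already said. The fallbacks (finer buckets, augmenting paths) are unnecessary. What the proposal lacks is an argument that actually achieves $\tilde O(k^d)$.

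\textbf{The missing idea is to require isolation only for the one element being exchanged}, i.e.\ to verify (SingleEXC) rather than protect the whole chain at once. Fix $X$, $x\in X$ and matchings $\mu_i$, and use $k$ buckets per matroid. Then the bucket of $\mu_i(x)$ contains no vertex of $\mu_i(X-x)$ with probability at least $(1-1/k)^{k-1}\ge 1/e$. In that case:
\begin{itemize}
\item $x$ lies in the class of its bucket tuple;
\item Lemma~\ref{lem:greedyproperty} yields $y$ in the greedy output with $y\notin X-x$, $w(y)\ge w(x)$ and $X-x+y\in\I_0$;
\item $y$'s neighbour in $x$'s bucket is not covered by $\mu_i(X-x)$, so $X-x+y$ is matchable in every $G_i$.
\end{itemize}
Then $O(e^{d-1}\log k)$ independent hashings, each with $k^{d-1}$ tuples of $k$ kept elements, give $\tilde O(k^d)$.

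The paper reaches the same bound without any hashing. It splits each $u$ into copies $(u,e_1,\dots,e_{d-1})$ with $e_i\in\delta_{G_i}(u)$. Each transversal matroid then becomes the simple partition matroid whose parts are the right-hand vertices of $G_i$. Their number is irrelevant, since Theorem~\ref{thm:partition} samples parts with probability $1/k$. Likewise $\M_0$ becomes $\M_0$ with parallel copies. The paper checks that projection onto the first coordinate maps the new feasible family onto $\F$, applies Theorem~\ref{thm:partition} as a black box, and projects the kernel back. Your corrected scheme is essentially that sampling, with all $k^{d-1}$ bucket tuples enumerated per round instead of one random set of parts.
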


\begin{restatable}{theorem}{laminar}\label{thm:laminar}
For the weighted $d$-matroid intersection problem such that, out of the $d$ given matroids, all except one are laminar matroids,
we can construct a \weakcoreset of size $k^{O(d \log k)}$ in polynomial time.
\end{restatable}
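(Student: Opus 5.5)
The plan is to build the kernel bottom-up along the laminar families, using a random-coloring/bucketing step as in the proof of Theorem~\ref{thm:partition}. That step only needs the structure imposed by $\M_1,\dots,\M_{d-1}$; the arbitrary matroid $\M_0$ is handled through its independence oracle via a greedy \emph{reachability} argument.

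First I would normalize each laminar matroid $\M_i$, given by a laminar family $\mathcal{L}_i$ with capacities $c_i$. Any set of size at most $k$ only sees capacities up to $k$, so I cap every $c_i(L)$ at $k$. Each $\M_i$ is then described by a forest in which each node carries a capacity in $\{1,\dots,k\}$.

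The key structural claim replaces coverability. Fix a feasible solution $X=\{x_1,\dots,x_t\}$. For each $x_j$ and each $i$, look at the chain of sets of $\mathcal{L}_i$ containing $x_j$. When $x_j$ is swapped for some $y$, only the \emph{tight} sets on this chain constrain $y$, where tight means $|X\cap L|=c_i(L)$. I would show that it suffices to record, for each $i$, a ``profile'' of $y$: the minimal tight set of $\mathcal{L}_i$ containing $y$, or more coarsely, the position of $y$ relative to the at most $k$ tight sets.

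The obstacle is that in a laminar family the tight sets relevant to $X$ can be nested arbitrarily deep, so a profile is not determined by $O(k)$ bits. I would handle this with a heavy-path/centroid style decomposition of each laminar tree. Any chain can then be described by $O(\log k)$ choices, each among $k^{O(1)}$ options, after contracting branches containing no element of $X$. The number of distinct profile classes per matroid is therefore $k^{O(\log k)}$, and over the $d-1$ laminar matroids it is $k^{O(d\log k)}$.

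The kernel construction then proceeds by recursion. For each guessed combination of profile classes, realized by random sampling of the decomposition choices with repetition to boost the success probability, I collect the elements consistent with that combination. Within that bucket I keep a maximum-weight basis of $\M_0$ restricted to it, truncated to rank $k$ and computed greedily by weight with the oracle. I take $R$ to be the union of these greedy sets, so $|R|\le k\cdot k^{O(d\log k)}$.

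To verify reachability I proceed inductively, replacing $x_1,\dots,x_t$ in order. For $x_i$, I take the bucket matching its profile with respect to the current set $\{y_1,\dots,y_{i-1},x_i,\dots,x_t\}$. Every element of that bucket can replace $x_i$ without violating any laminar constraint. Among the kernel elements of the bucket, the greedy-basis exchange property for $\M_0$ supplies some $y_i$ with $w(y_i)\ge w(x_i)$ and $\M_0$-independence after the swap, as in the proof of Theorem~\ref{thm:partition}. The step I expect to be hardest is showing that the decomposition profile genuinely captures the laminar feasibility, because a swap can change which sets are tight. I would first try fixing profiles with respect to $X$ itself and showing that tightness is monotone along the replacement sequence.
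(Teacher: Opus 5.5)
\textbf{There is a genuine gap: the laminar-specific step, which is the whole difficulty, is not supplied.}

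Your outer framework is the paper's. You build a small random family of candidate sets for each laminar $\M_i$, intersect one candidate per matroid, and run \texttt{Greedy} for $\M_0$ inside each intersection. You then take $k^{O(d\log k)}$ independent repetitions and verify swap by swap, measuring tightness with respect to the current set $\{y_1,\dots,y_{i-1},x_i,\dots,x_t\}$. Your fallback of fixing profiles with respect to the original $X$ would not work: replacing $x_i$ by $y_i$ can make a set containing $y_i$ newly tight, so tightness is not monotone.

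What is missing is Lemma~\ref{lem:laminarcandidate}. You need an $X$-oblivious random family of $k^{O(\log k)}$ sets such that, for each fixed $X$ and $x$, with probability $k^{-O(\log k)}$ it contains a set that contains $x$ and meets no tight set of $X-x$. Your heavy-path/centroid step does not provide this, for three reasons.
\begin{itemize}
\item The claimed $O(\log k)$ choices among $k^{O(1)}$ options refer to the tree obtained \emph{after contracting branches containing no element of $X$}. That tree depends on the unknown solution, so it cannot define buckets when the kernel is built.
\item A decomposition of the full laminar tree has size and branching governed by $|E|$, not $k$. Saying the choices are ``realized by random sampling'' does not specify which sets are sampled, or with what probability the right bucket appears.
\item Describing the chain of $x$ is not what is needed. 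An exchangeable set must avoid all maximal tight sets of $X-x$ (up to $k-1$ of them), and these hang off that chain at different capacity levels. You never identify concrete buckets in which every element can replace $x_i$ in every $\M_i$.
\end{itemize}
You flag this as the hardest step yourself. As written, neither the $k^{O(\log k)}$ count nor the correctness of the buckets is established.

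The paper's missing idea is to recurse on capacity, which is at most $k$, rather than on the shape of the tree.
\begin{itemize}
\item \textbf{Level partitions.} For a disjoint subfamily $\mathcal{Z}\subseteq\mathcal{L}$ of height $h$ (the maximum capacity of a member), let $\mathcal{Z}_i$ be the partition of $E(\mathcal{Z})$ into maximal sets of capacity at most $i$. Keeping each part independently with probability $1/k$ retains the part containing $x$ and drops the at most $k-1$ parts meeting $X-x$ with probability $\Omega(1/k)$.
\item \textbf{Good families and the potential.} Call $\mathcal{Z}$ good if $x$ lies in a member $Z_0$, no other member meets $X$, and no tight set contains a member. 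Let $\phi(\mathcal{Z})$ be the largest capacity of a tight set inside $Z_0$. Sampling at any level $i\ge\phi(\mathcal{Z})$ preserves goodness with probability $\Omega(1/k)$.
\item \textbf{Halving step.} If $\phi<\lceil h/2\rceil$, take $i=\lceil h/2\rceil$. Otherwise take $i=\phi$; the new $x$-part is then disjoint from a tight set of capacity $\phi$ inside $Z_0$, so it meets $X-x$ in at most $h-\phi$ elements. In either case a second sampling at level $h-i$ yields a good family of height at most $h/2$.
\item \textbf{Counting.} Trying every $i\in\{\lceil h/2\rceil,\dots,h-1\}$ gives branching at most $k$, depth $\log_2 k$, at most $k^{\log_2 k}$ candidate sets, and success probability about $k^{-2\log_2 k}$. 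At height $1$ the union of a good family is $(X,x)$-exchangeable. This is where the bound $k^{O(\log k)}$ comes from.
\end{itemize}
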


Theorems~\ref{thm:transversal} and~\ref{thm:laminar}  are proved in Sections~\ref{sec:transversal} and~\ref{sec:laminar}, respectively. For transversal matroids, we introduce a reduction technique to transform them to simple partition matroids. On the other hand, for laminar matroids, as there are a significantly larger number of possibilities concerning how a feasible solution is ``distributed'' among the sets of the laminar family, we need to employ a more aggressive guessing strategy and this causes our kernel size to be quasi-polynomial.

Our kernelization technique can also be applied to certain classes of the {\em $d$-matchoid problem}. 
Since dealing with the general $d$-matchoid problem would make the discussion overly complicated, 
this paper focuses only on one particularly interesting example, 
the {\em weighted matching problem with a matroid constraint}.
In the problem, we are given a graph $G = (V, E)$, a matroid $\M = (E, \I)$ defined on the edge set of $G$, and a weight function $w: E\to \mathbb{R}_{\ge 0}$. 
The goal of the problem is to find a matching of maximum weight that is independent in $\M$. 
For this problem, we present a kernerization algorithm parameterized by the solution size $k$. 
See Section~\ref{sec:matroidal_matching} for the proof. 

\begin{restatable}{theorem}{matroidalmatching}\label{thm:matroidalmatching}
    For the weighted matching problem with a matroid constraint, we can construct a \weakcoreset of size $\tilde{O}(k^3)$ in polynomial time.
\end{restatable}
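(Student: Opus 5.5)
Theorem~\ref{thm:matroidalmatching} can be reduced to the setting of Theorem~\ref{thm:partition} with $d=3$: an arbitrary matroid $\M$ plus two simple partition matroids. The matching constraint is itself almost an intersection of two partition matroids. If $G$ were bipartite with sides $A,B$, a matching would be exactly a common independent set of the partition matroids whose classes are the stars $\delta(v)$ for $v\in A$ and for $v\in B$, respectively. Theorem~\ref{thm:partition} would then directly give a reachable kernel of size $\tilde{O}(k^3)$.

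For general $G$, I would use random bipartition (color coding). Color each vertex independently red or blue, and let $E'$ be the set of red--blue edges. Fix a feasible solution $X=\{x_1,\dots,x_t\}$, $t\le k$. Each edge of $X$ is bichromatic with probability $1/2$, independently across edges since $X$ is a matching. So $X\subseteq E'$ with probability $2^{-t}$, which is too small.

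Instead, I would mimic the proof of Theorem~\ref{thm:partition}, whose algorithm presumably already uses random colorings or guessing of the part structure. The plan is to reprove it with the stars $\delta(v)$ as the ``parts'' of a single non-disjoint family. The key steps are as follows.
\begin{enumerate}
\item Re-express the matching constraint as the set system where each edge $e=uv$ lies in exactly two parts, $\delta(u)$ and $\delta(v)$. This is a $2$-uniform hypergraph version of the $d-1=2$ partition matroids.
\item Check that the exchange argument behind Theorem~\ref{thm:partition} only uses, for each element $e$, the list of at most $d-1$ parts containing $e$. It should never use that these parts come from $d-1$ distinct partitions. In particular, the replacement $y_i$ for $x_i=uv$ must avoid the parts $\delta(u'),\delta(v')$ of the other current solution edges. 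It must also keep the $\M$-independence of $\{y_1,\dots,y_i,x_{i+1},\dots,x_t\}$, which the arbitrary-matroid part of that proof handles via the $\M$-exchange and weight-greedy selection.
\item Rerun the size analysis. The candidate pool for each ``type'' is indexed by a pair of vertices (the parts hit), filtered by avoiding $O(k)$ blocked parts. This gives $\tilde{O}(k\cdot k\cdot k)=\tilde{O}(k^3)$.
\end{enumerate}

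The main obstacle is step 2: the proof of Theorem~\ref{thm:partition} likely orders or labels the partition matroids, for example by guessing which elements of $X$ block which coordinate. With unordered endpoints, an edge $uv$ can conflict via either endpoint. I would handle this with the random red/blue coloring only as a tie-breaking device. Orient each edge from its red to its blue endpoint, or arbitrarily if monochromatic, so the two parts of each edge get ``coordinates'' $1$ and $2$. A part $\delta(v)$ may then serve as coordinate $1$ for some edges and coordinate $2$ for others. The exchange argument should tolerate this, since blocking is checked per vertex. The $2/3$ success probability is then obtained by repeating $O(\log k)$ independent runs and taking the union, as in the footnote to Definition~\ref{def:weakcoreset}. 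This keeps the size $\tilde{O}(k^3)$.
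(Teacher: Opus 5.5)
\textbf{Gap: the central argument is never carried out.} Your instinct is right: treat the stars $\delta(v)$ as the parts and rerun the partition-matroid argument. That is essentially what the paper does. But the proposal never actually makes that argument. Step~2 is an unverified claim about a proof you have not seen. Your guesses about that proof's structure do not match it. It involves no guessing of which elements of $X$ block which coordinate, and no candidate pools indexed by types. It simply samples parts with probability $\frac{1}{k}$, intersects, and runs \texttt{Greedy}. Your size count in Step~3 also does not correspond to an algorithm: a pool indexed by vertex pairs has size about $|V|^2$, not $k^2$.

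\textbf{The missing construction and estimate.} Sample each vertex independently with probability $\frac{1}{2k}$ to get $S$. Run $\texttt{Greedy}(\M, E(S), k)$, where $E(S)$ is the set of edges with both endpoints in $S$. Let $x=uv$. Then $E(S)$ is $(X,x)$-exchangeable for the matching constraint (that is, $x\in E(S)$ and $X-x+y$ is a matching for every $y\in E(S)$) whenever $u,v\in S$ and $S$ avoids the at most $2k-2$ endpoints of $X-x$. Since $X$ is a matching, these vertex sets are disjoint. Hence this event has probability at least $\frac{1}{(2k)^2}\left(1-\frac{1}{2k}\right)^{2k-2}\ge\frac{1}{4ek^2}$. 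On this event, Lemma~\ref{lem:greedyproperty} applied to $\M$ and $E(S)$ gives $y$ in the greedy output with $y\notin X-x$, $w(y)\ge w(x)$, and $X-x+y\in\I$. Exchangeability then makes $X-x+y$ a matching. Use $\lceil 4ek^2\log(3k)\rceil$ rounds. Then the failure probability for each pair $(X,x)$ is at most $\frac{1}{3k}$, which is (SingleEXC), and the size is $O(k^3\log k)$. The factor $k^2$ comes from the number of rounds needed to hit both endpoints of $x$, times $k$ elements per greedy call.

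\textbf{The coloring is unnecessary.} The obstacle you identify, unordered endpoints, disappears in this formulation. A single sample $S$, with the requirement that \emph{both} endpoints of a kept edge lie in $S$, treats the endpoints symmetrically. Conflicts via either endpoint are handled by requiring $S$ to avoid \emph{all} endpoints of $X-x$, which costs only a constant factor. So the red/blue coloring is not needed; any fixed orientation would serve your two-coordinate version equally well.

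\textbf{Caveat for the two-coordinate version.} Each coordinate's sample must avoid all endpoints of $X-x$, not only the same-coordinate ones. Copying the per-partition blocking condition of Theorem~\ref{thm:partition} literally fails. Take the tail/head partitions and an edge $a\to b$ in $X-x$. An edge $y=c\to a$ with $c$ in the tail sample and $a$ in the head sample is not blocked, since only $b$ is excluded from the head sample. Yet $X-x+y$ covers $a$ twice, so it is not a matching.
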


The weighted matching problem with a matroid constraint is a generalization of the {\em rainbow matching problem}~\cite{pfender2014complexity, gupta2019parameterized,gupta2020quadratic}, 
which corresponds to the case where the given matroid is a simple partition matroid.
The rainbow matching problem has a kernel of size $O(k^3)$ by~\cite[Theorem 3]{gupta2019parameterized}, which is later improved to
a kernel with $O(k^2)$ vertices and $O(k^3)$ edges~\cite{gupta2020quadratic}.

Finally, we discuss the case when the reachability is realized deterministically.
We say that a subset $R \subseteq E$ is a {\rm \coreset} if 
it satisfies the conditions in Definition~\ref{def:weakcoreset} with probability $1$.
See Section~\ref{sec:deterministic} for details.

\begin{restatable}{theorem}{deterministic}\label{thm:deterministic}
For the weighted $2$-matroid intersection problem such that the matroids are a \gkgood matroid and an arbitrary matroid,
we can construct a \coreset of size $k^2 (g(k)+1)$ in polynomial time.
\end{restatable}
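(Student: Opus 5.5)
We construct $R$ greedily over $M_1$-flats and argue reachability by exchanging the elements of a feasible set one at a time. Here $M_1$ is the \gkgood matroid and $M_2$ is the arbitrary one.

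\textbf{Construction.} First group the elements of $E$ into parallel classes of $M_1$: loops are discarded, and $e\sim f$ if $\span_{M_1}(e)=\span_{M_1}(f)$. Then build $R$ iteratively. Maintain a current set, and for each step choose a flat-representative to cover. Concretely, for every class $P$ (rank-one flat minus loops) that we decide to include, add to $R$ a max-weight basis-greedy set of up to $k$ elements of $P$ that is independent in $M_2$. These are obtained by scanning $P$ in non-increasing weight order and keeping an element iff it is independent in $M_2$ together with the already kept ones from $P$, stopping at $k$.

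Which classes to include? We cannot include all classes, since their number is unbounded. Instead, do a depth-bounded recursion of depth $k$. Start with $S=\emptyset$. Given the current $M_1$-independent set $S$ with $|S|<k$, use $g$-coverability to find $F_S$ with $|F_S|\le g(|S|)$ and $\span_{M_1}(S)=\bigcup_{f\in F_S}\span_{M_1}(f)$. The classes relevant to $S$ are those of the $f\in F_S$, plus one "free" choice representing everything outside $\span_{M_1}(S)$.

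This naive recursion is exponential, so I would replace it with the following count. We only need, for each of the $k$ positions, the classes meeting $\span(X-x_i)$ in the exchange argument. The target bound $k^2(g(k)+1)$ suggests exactly this: $k$ elements per class, times $g(k)+1$ classes, times $k$. The plan is therefore to prove the exchange argument directly via the greedy choice of the top elements in $M_2$-greedy order.

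\textbf{Reachability.} Let $X=\{x_1,\dots,x_t\}$ be common independent. Replace $x_i$ one at a time with $y_i\in R$ of weight $\ge w(x_i)$, keeping the set $Y_i=\{y_1,\dots,y_i,x_{i+1},\dots,x_t\}$ independent in both matroids. Let $Z=Y_{i-1}-x_i$, so $|Z|<k$.

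For the $M_1$ side, any $y$ with $y\notin\span_{M_1}(Z)$, or $y\parallel x_i$, works. By coverability, $\span_{M_1}(Z)$ is the union of at most $g(k)$ parallel classes. Hence the set of elements blocked by $M_1$ consists of at most $g(k)$ classes, plus possibly the class of $x_i$ (which is allowed).

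For the $M_2$ side, a candidate set $C$ that is $M_2$-independent, of size $k$, with all weights $\ge w(x_i)$, contains some $y$ with $Z+y\in\I_2$. This holds because $|Z|<k$, so $Z$ spans at most $|Z|$ independent elements of $C$ by the augmentation property.

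So it suffices to ensure that, among elements of weight $\ge w(x_i)$, $R$ contains at most $k$ $M_2$-greedy elements avoiding each choice of at most $g(k)$ blocked classes. Combining both sides, we need $k(g(k)+1)$ $M_2$-greedy elements with at most $k$ per class. More precisely, scan $E$ in non-increasing weight order and add $e$ to $R$ if the class of $e$ has fewer than $k$ chosen elements and there is some room. Then any $g(k)$ blocked classes remove at most $kg(k)$ elements, and we need $k$ more.

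\textbf{Main obstacle.} The hard step is to build $R$ so that, for every threshold weight, the prefix of $R$ contains enough $M_2$-independent elements outside any $g(k)$ classes. I would try a layered greedy: build $R$ as a union of $k(g(k)+1)$ disjoint $M_2$-greedy independent sets $B_1,B_2,\dots$, each of size at most $k$, where each $B_j$ is built in weight order subject to a cap of $k$ elements per class. I would argue that at most $g(k)$ layers can be ruined by blocked classes, leaving a layer that contains a valid $y$. This would give size $k\cdot k(g(k)+1)$. Verifying this pigeonhole argument, including the case where $x_i$'s own class yields the element, is the main technical work. One must also check that $y_i\ne y_j$ holds automatically, since $Y_i$ has size $t$.
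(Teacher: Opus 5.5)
There is a genuine gap: you never fix a construction, and the one you end with fails. That construction is disjoint $M_2$-greedy layers with a cap of $k$ elements per $M_1$-parallel class.

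If the cap is per layer, it is vacuous, since each layer has at most $k$ elements. Then one heavy class contained in $\mathrm{span}_{M_1}(Z)$ can fill every layer.

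If the cap is global, your count of clean layers is fine: up to $kg(k)$ layers (not $g(k)$) are ruined, which still leaves $k$ clean ones. But a clean layer yields a valid $y$ only if $x_i$ was still eligible when that layer scanned it. The cap can lock $x_i$'s own class using elements that are useless against $Z$. Concretely:
\begin{itemize}
\item Take $k=2$, and let $M_1$ be the simple partition matroid with parts $\{a_1,a_2,x\}$ and $\{b,z\}$.
\item Let the independent sets of $M_2$ be the sets of size at most $2$ containing at most one of $a_1,a_2,z$.
\item Give $a_1,a_2,b,x,z$ the weights $11,10,9,8,1$.
\item Your layers are $B_1=\{a_1,b\}$, then $B_2=\{a_2\}$ (the class of $x$ is now capped, and $z$ is $M_2$-parallel to $a_2$), then $B_3=\{z\}$, and the rest are empty. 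So $R=\{a_1,a_2,b,z\}$.
\item Take $X=\{x,z\}$ and exchange $x$ first, so $Z=\{z\}$. The elements of $R$ of weight at least $8$ are $a_1,a_2,b$.
\item Here $b$ is $M_1$-parallel to $z$, and $a_1,a_2$ are $M_2$-parallel to $z$, so no valid $y$ exists.
\end{itemize}
The case you deferred, ``where $x_i$'s own class yields the element,'' is therefore the missing idea, not a verification detail.

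The paper combines your two half-ideas differently. It runs $g(k)+1$ rounds, and in round $t$:
\begin{itemize}
\item it computes $F_t=\texttt{Greedy}(M_2,U_t,k)$ on the current ground set $U_t$;
\item for every $e\in F_t$ it adds $\texttt{Greedy}(M_2,\mathrm{span}_{M_1}(e)\cap U_t,k)$ (these are your per-class greedy sets, with the classes now chosen by the main greedy);
\item it then deletes these whole classes from the ground set instead of capping them.
\end{itemize}
The argument then splits into two cases:
\begin{itemize}
\item If some $F_t$ hits the class of $x_i$, that class is an $(X,x_i)$-exchangeable set for $M_1$, and Lemma~\ref{lem:exchangeablesetgreedy} gives $y$.
\item Otherwise $x_i\in U_t$ for all $t$, so Lemma~\ref{lem:greedyproperty} gives $y_t\in F_t$ with $w(y_t)\ge w(x_i)$ and $Z+y_t\in\I_2$. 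This includes the case of fewer than $k$ heavy candidates, where $x_i$ lies in their $M_2$-span and Lemma~\ref{lem:matroid_property} applies; your $M_2$-side argument omits that case. Deletion puts the $y_t$ in $g(k)+1$ distinct classes, so one of them avoids the at most $g(k)$ classes forming $\mathrm{span}_{M_1}(Z)$.
\end{itemize}
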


\subsection{Advantages of Reachable Kernel} \label{para:anotherapplication}

We here describe a few applications of a \weakcoreset in different contexts.

The first application is in the design of {\em parameterized streaming algorithms}~\cite{fafianie2014streaming, chitnis2014parameterized, chitnis2016kernelization, chitnis_et_al:LIPIcs.IPEC.2019.7}.
Suppose that we are given a feasible region $\mathcal{F} \subseteq 2^E$ over a ground set $E$ through oracle access, and the elements of $E$ arrive one by one. 
The goal is to find a set $X \subseteq E$ that maximizes $\sum_{e \in X} w(e)$ 
subject to $X \in \mathcal{F}$ and $|X| \le k$. 
We consider the setting where $|E|$ is very large and at any moment we can keep at most $f(k)$ elements, discarding all the others. 
Huang and Ward~\cite{huang2023fpt} introduced a slight generalization of the concept of a \weakcoreset, which they call a {\em representative set}\footnote{Note that this notion is different from representative sets used for the linear matroid intersection.}.
For the $d$-matroid intersection problem (and, more generally, the $d$-matchoid problem), they proposed a streaming algorithm that always maintains a representative set whose size is bounded by an exponential function of $k$.
In this context, we can confirm that our results can be generalized to adapt the streaming setting, resulting in streaming algorithms that use only polynomial~(or quasi-polynomial) space in $k$ for several classes of matroids.

Furthermore, our \weakcoreset has applications to improving the running time of EPTASs for the budgeted version of matroid intersection~\cite{doronarad_et_al:LIPIcs.ICALP.2023.49}.
Doron-Arad et al.~\cite{doronarad_et_al:LIPIcs.ICALP.2023.49} presented an EPTAS for the budgeted version, using a representative set by Huang and Ward~\cite{huang2023fpt}.
Their EPTAS runs in $q(\varepsilon)^{O(q(\varepsilon))} \cdot \mathrm{poly}(|I|)$ time, where $q(\eps) = \left\lceil \frac{1}{\eps^{\eps^{-1}}} \right\rceil$ and $|I|$ denotes the input size.
Although this running time yields an EPTAS, the dependence on $\varepsilon$ is extremely high.
By following the argument in~\cite{doronarad_et_al:LIPIcs.ICALP.2023.49} with our reachable kernel, replaced with their representative set, we can improve the running time of the EPTAS to $2^{O\left( \eps^{-2} \log \frac{1}{\eps}\right)} \cdot \text{poly}(|I|)$ for several classes of matroids, where the dependence on $\varepsilon$ becomes significantly milder.

\subsection{Additional Related Work}

Matroids play an important role in parameterized complexity theory, as the textbooks~\cite{cygan2015parameterized, fomin2019kernelization} devote chapters to matroids.
See also~\cite{lokshtanov2018deterministic, gurjar_et_al:LIPIcs.MFCS.2025.54, kratsch2014compression, bentert_et_al:LIPIcs.ESA.2025.83, marx2009parameterized,fomin2016efficient,kratsch2020representative, wahlstrom2024representative,fomin2014representative, fomin_et_al:LIPIcs.SWAT.2024.22, banik_et_al:LIPIcs.ESA.2024.17, lokshtanov_et_al:LIPIcs.ITCS.2018.32, fomin_et_al:LIPIcs.ICALP.2019.59}.
In particular, linear matroids have been actively studied in the parameterized algorithm community.
For the $d$-matroid intersection problem when all matroids are linear and matrix representations are available,
Barvinok~\cite{barvinok1995new} presented the first algorithm faster than naive enumeration.
Marx~\cite{marx2009parameterized} provided a fixed-parameter algorithm, and subsequent works~\cite{fomin2016efficient, brand2021parameterized, eiben2024determinantal} proposed faster algorithms.

Recently, there have been several studies addressing graph problems under matroid constraints, with a focus on parameterized algorithm and kernelization, such as the stable set problem~\cite{fomin_et_al:LIPIcs.SWAT.2024.22} and various cut problems~\cite{banik_et_al:LIPIcs.ESA.2024.17}.

%% file: preliminaries.tex
\section{Preliminaries} \label{sec:preliminaries}

\subsection{Basic Notation}

For a positive integer $t$, we denote $[t]=\{1, 2, \dots , t\}$. 
Throughout this paper, we often write $A+x := A\cup\{x\}$ and $A-x := A\setminus \{x\}$ for a set $A$ and an element $x$.

A pair $\mathcal{M}=(E, \I)$ of a finite set $E$ and a non-empty set family $\I \subseteq 2^E$ is called a \textit{matroid} if the following properties are satisfied.
\begin{enumerate}
\item If $X\in \I$ and $Y \subseteq X$ then $Y\in \I$.
\item If $X, Y \in \I$ and $|Y| < |X|$, then there exists $x \in X\setminus Y$ such that $Y +x \in \I$.
\end{enumerate}
A set $X\in \I$ is called an \textit{independent set}.
We call a maximal independent set a \textit{base}.
For $U \subseteq E$, the {\em restriction} of $\M$ to $U$ is the matroid $\M | U = (U, \I_U)$, where 
$\I_U = \{ X \in \I \mid X \subseteq U \}$. 
Throughout the paper, we assume that every matroid has no loop, i.e., $\{x\} \in \I$ for every $x \in X$, 
since we can remove elements that do not belong to any independent set.

Let $\M=(E, \I)$ be a matroid.
For a subset $X$ of $E$, we define the \textit{rank} of $X$ as $\textrm{rank}_{\M} (X) = \max\{|Y| \mid Y\subseteq X, Y\in \I\}$.
The \textit{rank} of $\M$ is equal to $\textrm{rank}_\M (E)$.
For a subset $X$ of $E$, the \textit{span} of $X$ is defined as
\[
\textrm{span}_{\M} (X) = \{ x\in E\mid \textrm{rank} (X)=\textrm{rank} (X + x)\}.
\]
When a given matroid $\M$ is clear from the context, we sometimes write $\textrm{rank} (X)$ and $\textrm{span} (X)$ for $\textrm{rank}_\M (X)$ and $\textrm{span}_\M (X)$, respectively.
For an element $e \in E$, $\textrm{span}_{\M} (\{e\})$ is simply denoted by $\textrm{span}_{\M} (e)$. 
A pair of distinct elements $e, e' \in E$ are called {\em parallel} if $\textrm{span}_{\M} (e) = \textrm{span}_{\M} (e')$. 

We assume that each general matroid is given by an independence oracle, while each special matroid is given in an appropriate explicit form. 
Classes of matroids considered in this paper are summarized in Appendix~\ref{sec:matroidclass}.

\subsection{Technical Tools}

Let $\F$ be a family of subsets in $E$.
As in Definition~\ref{def:weakcoreset}, a \weakcoreset for $\F$ is defined by the existence of a sequence $y_1, \dots , y_t$ of elements. 
We observe that a randomly selected set $R \subseteq E$ is a \weakcoreset if $R$ satisfies the following single exchangeability.
\begin{description}
   \item[(SingleEXC)]
    For any $X \in \mathcal{F}$ with $|X| \le k$ and any $x \in X$, there exists $y \in R \setminus (X-x)$ such that 
    $w(y) \ge w(x)$ and $X - x + y \in \mathcal{F}$ with probability at least $1 - \frac{1}{3k}$.
\end{description}
Indeed, for $i = 1, 2, \dots , t$ in this order, by applying (SingleEXC) for 
$X = \{y_1, \dots , y_{i-1}, x_i, \dots , x_t\}$ and $x = x_i$, 
we can obtain $y_i$ satisfying the conditions in Definition~\ref{def:weakcoreset}. 
Note that the total error probability is at most $\frac{1}{3k} \cdot t \le \frac{1}{3}$. 
We remark that, when $k = 1$, (SingleEXC) obviously holds by choosing a maximum-weight element of size one in $\F$ as $R$.
Thus, in what follows, we may assume that $k \ge 2$.

If $\mathcal{F}$ is a family of the independent sets of a single matroid, a simple greedy algorithm can find a maximum-weight independent set of size at most $k$, which satisfies~(SingleEXC). 
As we will perform this greedy algorithm to a subset $F\subseteq E$ as a subroutine of our algorithm, we describe it formally in Algorithm~\ref{greedy}.
Here, we assume that the elements are sorted in non-increasing order of their weights, 
and we omit the weight function $w$ in the input of the algorithm.

\begin{algorithm}
    \caption{\texttt{Greedy}$(\M, F, k)$}\label{greedy}
    \KwInput{A matroid $\M = (E, \I)$, $F \subseteq E$, and an integer $k$}
    Let $F = \{ f_1, \ldots, f_\ell \}$; \\
    \tcp{The elements in $F$ are sorted in non-increasing order of the weight, i.e., $w(f_1) \geq w(f_2) \geq \cdots \geq w(f_\ell)$.} 
    $I \gets \emptyset$; \\
    \For{$i \gets 1$ \KwTo $\ell$} {
        \If{$I + f_i \in \I$ and $|I + f_i| \leq k$} {
            $I \gets I + f_i$; \\
        }
    }
    \Return $I$;
\end{algorithm}

The following two lemmas are basic properties of matroids and the greedy algorithm.
We provide proofs for completeness. 

\begin{lemma} \label{lem:matroid_property}
    Let $\M=(E,\I)$ be a matroid. Suppose that $T, X\in\I$ and $x \in X$ with $x \in \mathrm{span}_\M(T)$.
    Then, there exists $y\in T \setminus (X - x)$ such that $X - x + y \in \I$.
\end{lemma}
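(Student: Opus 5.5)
The plan is to apply the augmentation axiom to $X - x$ inside the set $T \cup (X - x)$, and then use the hypothesis $x \in \mathrm{span}_\M(T)$ to control what gets added.

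First, dispose of a trivial case. If $x \in T$, take $y = x$. Then $y \in T \setminus (X - x)$ and $X - x + y = X \in \I$. So assume from now on that $x \notin T$.

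Second, I claim that $\mathrm{rank}_\M(T \cup (X - x)) \ge |X|$. Since $x \in \mathrm{span}_\M(T)$, we have $x \in \mathrm{span}_\M(T \cup (X - x))$ by monotonicity of span. It follows that
\[
\mathrm{rank}_\M\bigl(T \cup (X - x)\bigr) = \mathrm{rank}_\M\bigl(T \cup X\bigr) \ge \mathrm{rank}_\M(X) = |X|.
\]
Monotonicity of span follows from submodularity of rank. If one prefers to avoid this, the same conclusion can be reached by extending $X - x$ to a maximal independent subset of $T \cup (X - x)$ and arguing that if that subset had size $|X| - 1$, then $T \subseteq \mathrm{span}_\M(X - x)$ would force $x \in \mathrm{span}_\M(X - x)$, contradicting $X \in \I$.

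Third, pick an independent set $Z \subseteq T \cup (X - x)$ with $|Z| \ge |X| > |X - x|$. The augmentation axiom applied to $Z$ and $X - x \in \I$ gives an element $y \in Z \setminus (X - x)$ with $X - x + y \in \I$. Since $Z \subseteq T \cup (X - x)$, this element lies in $T \setminus (X - x)$, as required.

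The only step needing care is the rank claim in the second step, namely the transitivity of span, i.e.\ $\mathrm{span}(\mathrm{span}(T)) = \mathrm{span}(T)$ applied to $T \cup (X - x)$. This is a standard matroid fact and I would cite it or verify it via submodularity. The remaining steps are routine.
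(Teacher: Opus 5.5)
Your proof is correct, but your main route differs from the paper's. The paper argues by contradiction. If $X-x+y\notin\I$ for every $y\in T\setminus(X-x)$, then every element of $T$ lies in $\mathrm{span}_\M(X-x)$. Hence $x\in\mathrm{span}_\M(T)\subseteq\mathrm{span}_\M(\mathrm{span}_\M(X-x))=\mathrm{span}_\M(X-x)$, contradicting $X\in\I$. This is exactly the fallback you sketch in parentheses.

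Your main argument instead turns the hypothesis into the rank bound $\mathrm{rank}_\M(T\cup(X-x))\ge|X|$ and then applies the augmentation axiom directly. It therefore needs only monotonicity of span, not idempotence. The paper's version is a line shorter but relies on the full closure-operator property $\mathrm{span}_\M(\mathrm{span}_\M(A))=\mathrm{span}_\M(A)$.

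Two small remarks. First, your closing paragraph names transitivity of span as the delicate step, but your main argument never uses it; it uses only monotonicity, which is what you actually invoke in the second step. Transitivity is needed only in your parenthetical alternative. Second, the case split on $x\in T$ is unnecessary. If $x\in T$, then $T\cup(X-x)=T\cup X$, and the same rank-plus-augmentation reasoning already yields $y\in T\setminus(X-x)$.
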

    
\begin{proof}
    Assume for contradiction that $X - x + y \notin \I$ for every $y \in T \setminus (X - x)$.
    This means that $y \in \mathrm{span}_\M(X-x)$ for all $y \in T$, and hence $T\subseteq \mathrm{span}_\M(X - x)$.
    By taking the span of both sides, we obtain
    \begin{equation*}
        x\in \mathrm{span}_\M(T)\subseteq \mathrm{span}_\M\!\big(\mathrm{span}_\M(X-x)\big) =\mathrm{span}_\M(X-x).
    \end{equation*}
    Hence $x\in \mathrm{span}_\M(X-x)$, which implies that $X$ is not independent, a contradiction.
    Therefore, some $y \in T$ satisfies $X - x + y \in \I$. 
\end{proof}

\begin{lemma} \label{lem:greedyproperty}
    Let $\M=(E,\I)$ be a matroid with a weight function $w \colon E \to \mathbb{R}_{\ge 0}$ and let $k$ be a positive integer.  
    Suppose that $F \subseteq E$, $X\in\I$, $|X| \le k$, and $x \in F \cap X$. 
    Then, the output of {\rm $\texttt{Greedy}(\M, F, k)$} contains an element $y$ such that 
    $y \notin X-x$, $w(y) \ge w(x)$, and $X-x+y \in \mathcal{I}$.     
\end{lemma}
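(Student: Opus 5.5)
Let $I$ denote the output of $\texttt{Greedy}(\M, F, k)$. The plan is to split into two cases according to whether the greedy algorithm stopped because of the cardinality bound, and in each case to apply Lemma~\ref{lem:matroid_property} to a suitable independent set $T$ consisting of elements of weight at least $w(x)$.

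We first record the basic invariant of the greedy algorithm. Let $I_x$ be the set $I$ at the moment just after $x$ has been processed. Since $x \in F$, this moment is well defined, and every element of $I_x$ has weight at least $w(x)$ because of the sorted order. If $x \in I_x$, then $y = x$ works trivially: $y \notin X-x$, $w(y)=w(x)$, and $X-x+y = X \in \I$. So assume $x \notin I_x$. Then either $I_x + x \notin \I$ or $|I_x| = k$ at that moment.

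\textbf{Case 1: $I_x + x \notin \I$.} Then $x \in \mathrm{span}_\M(I_x)$. We apply Lemma~\ref{lem:matroid_property} with $T = I_x \in \I$. It yields $y \in I_x \setminus (X-x)$ with $X-x+y \in \I$. Since $y \in I_x \subseteq I$, we have $w(y)\ge w(x)$, as required.

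\textbf{Case 2: $|I_x| = k$ and $I_x + x \in \I$.} Here we use the matroid augmentation property instead of the span argument. Since $|X - x| \le k-1 < k = |I_x|$ and both sets are independent, augmentation gives some $y \in I_x \setminus (X-x)$ with $X - x + y \in \I$. Again $w(y)\ge w(x)$ because $y \in I_x$.

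Both cases are routine. The only point needing care is to use the snapshot $I_x$ rather than the final output $I$, so that the weight comparison holds. Since $I_x \subseteq I$, the element $y$ indeed lies in the output.
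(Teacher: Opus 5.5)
Your proof is correct and follows the paper's argument. It uses the same two cases: either the greedy set already spans $x$, handled by Lemma~\ref{lem:matroid_property}, or it already has $k$ elements, handled by augmentation against $X-x$. The only difference is cosmetic: you use the snapshot $I_x$ taken when $x$ is processed, while the paper uses $T=\{e\in R\mid w(e)\ge w(x)\}$, which contains $I_x$ and gives the same dichotomy.
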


\begin{proof}
    Let $R$ be the output of $\texttt{Greedy}(\M, F, k)$. 
    It suffices to consider the case where $x \not\in R$, since otherwise $y = x$ satisfies the conditions. 
    Since $x \not\in R$ means that the greedy algorithm does not choose $x$, 
    $T := \{ e \in R \mid w(e) \ge w(x) \}$ satisfies that 
    $x \in \mathrm{span}_\M(T)$ or $|T| = k$. 
    If $x \in \mathrm{span}_\M(T)$, then Lemma~\ref{lem:matroid_property} shows the existence of $y \in T \setminus (X - x)$ with $X - x + y \in \I$.
    If $|T| = k$, then we have $|T| > |X-x|$ and $T, X-x \in \I$, and hence there exists $y \in T \setminus (X - x)$ with $X - x + y \in \I$ by the definition of a matroid. 
    Since $y \in T$ implies $y \in R$ and $w(y) \ge w(x)$, a desired $y$ exists in both cases. 
\end{proof}

For a matroid $\mathcal{M} = (E, \mathcal{I})$ and for $X \in \mathcal{I}$ and $x \in X$, 
we say that $Y \subseteq E$ is an {\em $(X, x)$-exchangeable set with respect to $\M$} if 
$x \in Y$ and $X -x +y \in \mathcal{I}$ for every $y \in Y$. 
In matroid-theoretic terms, an $(X, x)$-exchangeable set is a subset of a fundamental cocircuit of the matroid obtained from $\M$ by truncating it to rank $|X|$.   
In the later sections, we repeatedly use the following lemma.

\begin{lemma} \label{lem:exchangeablesetgreedy}
    Suppose we are given a matroid $\M_i=(E,\I_i)$ for $i \in \{0, 1, \dots, d-1\}$, a weight function $w \colon E \to \mathbb{R}_{\ge 0}$, and a positive integer $k$. 
    Let $\F = \bigcap_{i=0}^{d-1}\I_i$, and let $X \in \F$ be a feasible set with $|X| \le k$, and $x \in X$. 
    If $F_i \subseteq E$ is an $(X, x)$-exchangeable set with respect to $\M_i$ for $i \in [d-1]$, then
    the output of {\rm $\texttt{Greedy}(\M_0, \bigcap_{i \in [d - 1]} F_i, k)$} contains an element $y$ 
    such that 
    $y \notin X - x$, $w(y) \ge w(x)$, and $X-x+y \in \F$.         
\end{lemma}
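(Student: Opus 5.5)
Let $F := \bigcap_{i \in [d-1]} F_i$ and let $R$ be the output of $\texttt{Greedy}(\M_0, F, k)$. The plan is to apply Lemma~\ref{lem:greedyproperty} to the matroid $\M_0$ with the set $F$. Then we use exchangeability to transfer independence to the remaining matroids $\M_1, \dots, \M_{d-1}$.

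First we check the hypotheses of Lemma~\ref{lem:greedyproperty}. Since $X \in \F \subseteq \I_0$, we have $X \in \I_0$ and $|X| \le k$. By the definition of an $(X,x)$-exchangeable set, $x \in F_i$ for every $i \in [d-1]$, so $x \in F \cap X$. Lemma~\ref{lem:greedyproperty} therefore yields $y \in R$ with $y \notin X - x$, $w(y) \ge w(x)$, and $X - x + y \in \I_0$. Note that $y \in R \subseteq F$, because $\texttt{Greedy}$ only selects elements of its input set.

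Next, fix $i \in [d-1]$. Since $y \in F \subseteq F_i$ and $F_i$ is $(X,x)$-exchangeable with respect to $\M_i$, we get $X - x + y \in \I_i$. Combining this with the $\M_0$ part gives $X - x + y \in \bigcap_{i=0}^{d-1} \I_i = \F$, which completes the argument.

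There is no real obstacle here. The only points to be careful about are that $x$ belongs to the intersection $F$, so that Lemma~\ref{lem:greedyproperty} applies, and that the element $y$ it returns lies in $F$ rather than merely in $E$. Both follow directly from the definitions of an exchangeable set and of $\texttt{Greedy}$.
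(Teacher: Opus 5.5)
Your proposal is correct and follows the paper's proof essentially step for step: you apply Lemma~\ref{lem:greedyproperty} to $\M_0$ using $x \in \bigcap_{i\in[d-1]} F_i$, then use $y \in F_i$ and the exchangeability of each $F_i$ to get $X-x+y \in \I_i$. Your explicit remark that $y \in R \subseteq \bigcap_{i \in [d-1]} F_i$ makes precise a step the paper leaves implicit.
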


\begin{proof}
Let $R$ be the output of $\texttt{Greedy}(\M_0, \bigcap_{i \in [d - 1]} F_i, k)$. 
Since $x \in \bigcap_{i \in [d - 1]} F_i$, by Lemma~\ref{lem:greedyproperty}, there exists $y \in R \setminus (X-x)$ such that $w(y) \geq w(x)$ and $X - x + y \in \I_0$.
Such an element $y$ satisfies that $X - x + y \in \I_i$ for each $i \in [d-1]$, because $y \in F_i$ and $F_i$ is an $(X, x)$-exchangeable set with respect to $\M_i$. 
Therefore, $X - x + y \in \bigcap_{i=0}^{d-1}\I_i = \F$, which completes the proof. 
\end{proof}

%% file: partition.tex
\section{Partition Matroids and Extensions} \label{sec:partition}

\subsection{Kernelization Algorithm for Simple Partition Matroids} \label{subsec:partition}

In this subsection, we prove Theorem~\ref{thm:partition}.
Recall that a matroid $\mathcal{M} = (E, \mathcal{I})$ is a {\em simple partition matroid} if 
there exists a partition $(E^1, \ldots, E^{\ell})$ of $E$  such that
$\I = \{ I \subseteq E \mid |I \cap E^j| \leq 1 \text{ for any } j \in [\ell] \}$.
Suppose that we are given $d$ matroids $\M_0,\M_1,\dots,\M_{d-1}$, where $\M_0 = (E, \I_0)$ is an arbitrary matroid and $\M_i = (E, \I_i)$ is a simple partition matroid for $i \in [d-1]$.
We assume that each simple partition matroid $\M_i$ is represented by a partition $(E_i^1, \ldots, E_i^{\ell_i})$
of $E$. 
Define $\F=\bigcap_{i=0}^{d-1}\I_i$, which is a family of feasible sets of the problem.
Our goal is to construct a \weakcoreset of size $\tilde{O}(k^d)$. 

We construct a \weakcoreset by repeating the following procedure $T = \tilde{O}(k^{d - 1})$ times, where the exact value of $T$ will be determined later.
For each $i \in [d - 1]$, we sample each element $j \in [\ell_i]$ independently with probability $1/k$ to obtain a random subset $S_i \subseteq [\ell_i]$, and  
let $F_i = \bigcup_{j \in S_i} E^j_i$.
Then we apply $\texttt{Greedy}(\M_0, \bigcap_{i \in [d - 1]} F_i, k)$, that is, 
we greedily select up to $k$ elements from $\bigcap_{i \in [d - 1]} F_i$ in non-increasing order of weight, so that the selected set is independent in $\M_0$.
We denote the resulting set by $R'$ and add it to our \weakcoreset $R$.
The pseudocode for this algorithm is shown in Algorithm~\ref{alg:parititionkernel}.

\begin{algorithm}
    \caption{\texttt{PartitionKernel}$(\mathcal{M}_0, \dots , \mathcal{M}_{d-1})$}\label{alg:parititionkernel}
    $R \gets \emptyset$; \\
    \RepTimes{$T$}{
        \For{$i \in [d - 1]$} {
            Let $S_i \subseteq [\ell_i]$ be a subset obtained by independently sampling each element $j \in [\ell_i]$ with probability $\frac{1}{k}$; \\
            $F_i \gets \bigcup_{j \in S_i} E^j_i$; \\ 
        }
        $R' \gets \texttt{Greedy}(\M_0, \bigcap_{i \in [d - 1]} F_i, k)$; \\
        $R \gets R \cup R'$; \\
    }
    \Return $R$;
\end{algorithm}

In what follows, we prove the correctness of the algorithm by showing that the obtained set $R$ satisfies~(SingleEXC).
Let $X \in \mathcal{F}$ be a feasible solution of size at most $k$, and $x$ be an arbitrary element in $X$.

Consider one round (lines 3--7) in the algorithm where we generate the sets $F_i$'s for $i\in [d-1]$ by sampling.
We say that the round is {\em successful} if 
$F_i$ is an $(X, x)$-exchangeable set with respect to $\M_i$ (i.e., $x \in F_i$ and $X-x+y \in \I_i$ for any $y \in F_i$) for every $i \in [d-1]$. 
We now evaluate the probability that one round is successful. 

For each $i \in [d - 1]$, let $j^*_i \in [\ell_i]$ be the index with $x \in E^{j^*_i}_i$. 
Then, by the definition of a simple partition matroid, 
$F_i$ is an $(X, x)$-exchangeable set with respect to $\M_i$  
if and only if $j^*_i \in S_i$ and $j \notin S_i$ for any $j \in [\ell_i]$ with $(X-x) \cap E^j_i \neq \emptyset$. 
Hence, the probability that $F_i$ is an $(X, x)$-exchangeable set with respect to $\M_i$ is at least 
\begin{equation*}
\frac{1}{k} \left( 1 -  \frac{1}{k} \right)^{|X - x|} \geq \frac{1}{k} \left( 1 -  \frac{1}{k} \right)^{k - 1} \geq \frac{1}{ek},   
\end{equation*}
where we use the fact that $(1 - 1/\alpha)^{\alpha - 1} \geq 1/e$ for $\alpha > 1$ and the assumption that $k \ge 2$.
Therefore, the probability that one round is successful is at least $\left( \frac{1}{ek} \right)^{d - 1}$.
By setting $T = \lceil (ek)^{d - 1} \log(3k) \rceil$, 
the probability that there exists no successful round among the $T$ rounds is at most 
\begin{equation*}
\left( 1 - \left( \frac{1}{ek} \right)^{d - 1} \right)^T 
\leq \frac{1}{3k}, 
\end{equation*}
where we use $(1 - 1/\alpha)^{\alpha} \leq 1/e$ for $\alpha \ge 1$.
Therefore, with probability at least $1 - \frac{1}{3k}$, there exists at least one successful round.

For the sets $F_i$'s generated in a successful round, Lemma~\ref{lem:exchangeablesetgreedy} shows that
the output $R'$ of $\texttt{Greedy}(\M_0, \bigcap_{i \in [d - 1]} F_i, k)$ contains an element $y$ such that 
$y \notin X-x$, $w(y) \geq w(x)$, and $X - x + y \in \F$.
Since $R \subseteq E$ returned by Algorithm~\ref{alg:parititionkernel} includes $R'$, such an element $y$  
satisfies the conditions in (SingleEXC). 

Since this argument works for any $X \in \F$ and any $x \in X$, 
$R$ satisfies (SingleEXC).
The size of the set $R$ is at most $k \cdot T = O(e^{d - 1}k^d \log k)$.
Since Algorithm~\ref{alg:parititionkernel} clearly runs in polynomial time, the proof of Theorem~\ref{thm:partition} is complete.

\subsection{Kernelization Algorithm for \gkgood Matroids}  \label{subsec:algoforgkgood}

In this subsection, we consider \gkgood matroids, and prove Theorem~\ref{thm:gkgood}. 
Let $\M_0= (E, \I_0)$ be an arbitrary matroid, and $\M_i= (E, \I_i)$ be a \gikgood matroid for $i \in [d-1]$.
Note that $g_i(k) \ge k$ by the definition of a \gikgood matroid. 
We may assume that $|E| = \Omega \left( k \cdot \prod_i g_i(k) \right)$, since otherwise $E$ is trivially a \weakcoreset of size $\tilde{O} \left( k \cdot \prod_i g_i(k) \right)$.

We observe that, for $i\in [d-1]$, the ground set $E$ is covered by the spans of singletons, that is, there exists a subset $\tilde{E}_i\subseteq E$ such that $E=\bigcup_{e\in \tilde{E}_i}{\rm span}_{\M_i} (e)$ and ${\rm span}_{\M_i} (e)\neq {\rm span}_{\M_i} (e')$ for any distinct $e, e'\in \tilde{E}_i$.
Here, $\tilde{E}_i$ consists of one representative from each parallel class of $\M_i$.
We note that $\{{\rm span}_{\M_i} (e)\mid e\in \tilde{E}_i\}$ forms a partition of $E$ such that every independent set $X$ contains at most one element from each part of the partition.
We also note that the converse does not hold; even if a set $X$ contains at most one element from each part of the partition, it is not necessarily independent. 

Our algorithm that computes a \weakcoreset for \gkgood matroids is based on the same ideas as that for simple partition matroids in Section~\ref{subsec:partition}.
We will use the partition $\{{\rm span}_{\M_i} (e)\mid e\in \tilde{E}_i\}$ to sample a random subset.

\begin{algorithm}
    \caption{\texttt{CoverableKernel}$(\mathcal{M}_0, \dots , \mathcal{M}_{d-1})$}\label{alg:gkgoodkernel}
    $R \gets \emptyset$; \\
    \RepTimes{$T$}{
        \For{$i \in [d - 1]$} {
            Let $S_i \subseteq \tilde{E}_i$ be a subset obtained by independently sampling each element $e \in \tilde{E}_i$ with probability $\frac{1}{g_i(k)}$;  \\
            $F_i \gets \bigcup_{e \in S_i} \span_{\M_i}(e)$; \\ 
        }
        $R' \gets \texttt{Greedy}(\M_0, \bigcap_{i \in [d - 1]} F_i, k)$; \\
        $R \gets R \cup R'$; \\
    }
    \Return $R$;
\end{algorithm}

In our algorithm, we repeat the following procedure $T = \tilde{O}(\prod_{i = 1}^{d - 1} g_i(k))$ times. 
For each $i \in [d - 1]$, we sample each element $e \in \tilde{E}_i$ independently with probability $1/g_i(k)$ to obtain a random subset $S_i \subseteq \tilde{E}_i$.
Let $F_i = \bigcup_{e \in S_i} \span_{\M_i}(e)$.
Then, we apply $\texttt{Greedy}(\M_0, \bigcap_{i \in [d - 1]} F_i, k)$. 
We denote the resulting set by $R'$ and add it to our \weakcoreset $R$.
The pseudocode for this algorithm is shown in Algorithm~\ref{alg:gkgoodkernel}.

Similarly to the proof of Theorem~\ref{thm:partition}, we prove that $R$ satisfies~(SingleEXC).  
Let $X \in \mathcal{F}$ be a feasible solution of size at most $k$, and $x$ be an arbitrary element in $X$.
Consider one round in the algorithm where we generate the sets $F_i$'s for $i\in [d-1]$ by sampling.
We say that the round is {\em successful} if 
$F_i$ is an $(X, x)$-exchangeable set with respect to $\M_i$ (i.e., $x \in F_i$ and $X-x+y \in \I_i$ for any $y \in F_i$) for every $i \in [d-1]$. 
We now evaluate the probability that one round is successful. 

For each $i \in [d - 1]$, let $f_i \in \tilde{E}_i$ be the element with $x \in \span_{\M_i}(f_i)$ and 
let $Y_i \subseteq \tilde{E}_i$ be the set with $\span_{\M_i}(X-x) = \bigcup_{f \in Y_i} \span_{\M_i}(f)$.
Note that $f_i$ and $Y_i$ are uniquely determined and $f_i \notin Y_i$. 
Note also that $|Y_i| \le g_i(k)$ by the definition of a \gikgood matroid. 
Observe that 
$F_i = \bigcup_{e \in S_i} \span_{\M_i}(e)$ is an $(X, x)$-exchangeable set with respect to $\M_i$  
if and only if $f_i \in S_i$ and $e \notin S_i$ for any $e \in Y_i$. 
Hence, the probability that $F_i$ is an $(X, x)$-exchangeable set with respect to $\M_i$ is at least 
\begin{equation*}
\frac{1}{g_i(k)} \left( 1 -  \frac{1}{g_i(k)} \right)^{|Y_i|} \geq \frac{1}{g_i(k)} \left( 1 -  \frac{1}{g_i(k)} \right)^{g_i(k)} \geq 
\frac{1}{4 \cdot g_i(k)},
\end{equation*}
where the second inequality follows by the fact that $(1 - 1/\alpha)^{\alpha}$ is a monotonically increasing function for $\alpha > 1$ 
and by the assumption that $g_i(k) \ge k \ge 2$.
Therefore, the probability that one round is successful is at least $\frac{1}{4^{d - 1}}  \prod_{i = 1}^{d - 1} \frac{1}{g_i(k)}$.
By setting $T = \lceil 4^{d - 1} \log(3k) \prod_{i = 1}^{d - 1} g_i(k) \rceil$,  
the probability that there exists no successful round among the $T$ rounds is at most 
\begin{equation*}
\begin{aligned}
&\left( 1 -  \frac{1}{4^{d - 1}} \prod_{i = 1}^{d - 1} \frac{1}{g_i(k)} \right)^T \leq \frac{1}{3k}, 
\end{aligned}
\end{equation*}
where we use $(1 - 1/\alpha)^{\alpha} \leq 1/e$ for $\alpha \ge 1$. 
Therefore, with probability at least $1 - \frac{1}{3k}$, there exists at least one successful round.

In the same way as the proof of Theorem~\ref{thm:partition}, 
by applying Lemma~\ref{lem:exchangeablesetgreedy} for the sets $F_i$'s generated in a successful round, 
there exists $y \in R \setminus (X-x)$ such that $w(y) \geq w(x)$ and $X - x + y \in \F$.
Therefore, $R$ satisfies (SingleEXC).

Moreover, Algorithm~\ref{alg:gkgoodkernel} runs in polynomial time in $|E|$, because we assume that $|E| = \Omega \left( k \cdot \prod_i g_i(k) \right)$.
Finally, the size of the set $R$ returned by Algorithm~\ref{alg:gkgoodkernel} is at most 
$k \cdot T = O\left(4^{d - 1}k \log(k) \prod_{i = 1}^{d - 1} g_i(k) \right)$.
This completes the proof of Theorem~\ref{thm:gkgood}.

\subsection{Examples of \gkgood Matroids}
\label{sec:analysisgk}

We have already mentioned that simple partition matroids are \gkgoodf{$k$}, 
which implies that Theorem~\ref{thm:gkgood} is an extension of Theorem~\ref{thm:partition}. 
In this subsection, we present examples of \gkgood matroids. 

A matroid $\mathcal{M} = (E, \mathcal{I})$ is called a {\em graphic matroid} if 
there exists a graph $G = (V, E)$ such that $F \subseteq E$ is an independent set of $\M$ if and only if $F$ forms a forest in $G$. 

\begin{lemma} \label{lem:graphicgk}
    Graphic matroids are \gkgoodf{$O(k^2)$}.
\end{lemma}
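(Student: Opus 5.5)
We need to show that for a forest $X$ with $|X|\le k$ in a graph $G$, the span of $X$ in the graphic matroid is a union of at most $O(k^2)$ sets of the form $\mathrm{span}_{\M}(f)$. The plan is to describe $\mathrm{span}_{\M}(X)$ explicitly and then group its elements into parallel classes. Since the paper assumes there are no loops, $G$ has no self-loops, and $\mathrm{span}_{\M}(f)$ for an edge $f=uv$ is exactly the set of all edges parallel to $f$, i.e. all edges with endpoints $\{u,v\}$.

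Let $V(X)$ be the set of vertices incident to edges of $X$; then $|V(X)|\le 2|X|\le 2k$. An edge $e=uv$ lies in $\mathrm{span}_{\M}(X)$ if and only if $\mathrm{rank}(X+e)=\mathrm{rank}(X)$. Since $X$ is a forest, this happens if and only if $u$ and $v$ lie in the same connected component of the forest $(V(X),X)$. In particular, both endpoints of every edge of $\mathrm{span}_{\M}(X)$ lie in $V(X)$. Hence $\mathrm{span}_{\M}(X)$ is the union, over unordered pairs $\{u,v\}\subseteq V(X)$ in a common component of $X$, of the set of edges joining $u$ and $v$.

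For each such pair that is joined by at least one edge of $G$, pick one edge $f_{uv}$ joining them, and let $F$ be the set of these chosen edges. Then $\mathrm{span}_{\M}(f_{uv})$ is precisely the set of edges between $u$ and $v$, so $\mathrm{span}_{\M}(X)=\bigcup_{f\in F}\mathrm{span}_{\M}(f)$. The size bound is $|F|\le\binom{2k}{2}=O(k^2)$; more precisely, $|F|\le\sum_C\binom{|V(C)|}{2}$ over the components $C$ of $X$, which is also $O(k^2)$. The function $g(k)=\binom{2k}{2}$ is non-decreasing, as the definition of coverability requires.

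No step is a real obstacle. The only point needing care is verifying the characterization of the span of a forest, namely that it consists of the edges whose endpoints are in the same tree of $X$. This follows from the fact that adding $e$ to a forest keeps it a forest unless $e$ closes a cycle, and it closes a cycle exactly when its endpoints are already connected.
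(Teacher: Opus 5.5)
Your proof is correct and follows the paper's route: characterize $\mathrm{span}_{\M}(X)$ as the edges whose endpoints lie in a common component of the forest $X$, then keep one edge per joined vertex pair. Your count is actually more careful than the paper's stated bound $\binom{k}{2}$. A path with $k$ edges inside a complete graph already needs $\binom{k+1}{2}$ representatives, which agrees with your componentwise bound; only the $O(k^2)$ order matters, so this does not affect the lemma.
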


\begin{proof}
    Let $\M = (E, \I)$ be a graphic matroid represented by a graph $G=(V, E)$. 
    Fix $X \in \I$ with $|X| \leq k$.
    In a graphic matroid, an edge $e=\{u,v\}\in E$ lies in $\mathrm{span}_{\mathcal{M}}(X)$ if and only if $u$ and $v$ are in the same connected component of the forest $(V,X)$.
    Thus, by removing parallel edges from $\mathrm{span}_{\M}(X)$, we obtain a set $F \subseteq \mathrm{span}_{\M}(X)$ such that $\mathrm{span}_{\M}(X) = \bigcup_{f \in F} \mathrm{span}_{\M}(f)$ and $|F| \leq \binom{k}{2}$.
    This shows that graphic matroids are \gkgoodf{$O(k^2)$}. 
\end{proof}

The dual of a graphic matroid is called a {\em cographic matroid}. 
In other words, $\M = (E, \I)$ is a cographic matroid
if there exists a graph $G=(V, E)$ such that $F \subseteq E$ belongs to $\I$ 
if and only if $G - F$ has the same number of connected components as $G$.

\begin{lemma} \label{lem:cographicgk}
    Cographic matroids are \gkgoodf{$O(k)$}.
\end{lemma}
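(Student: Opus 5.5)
The plan is to describe $\mathrm{span}_{\M}(X)$ explicitly in graph terms and then show that, up to parallelism, only $O(k)$ distinct elements remain. Let $\M$ be the cographic matroid of $G=(V,E)$. Since $\M$ has no loops, $G$ has no bridges. For $X \in \I$ with $|X| \le k$, the graph $G-X$ has the same components as $G$. For $e \notin X$ we have $X+e \notin \I$ exactly when $e$ is a bridge of $G-X$. Hence
\[
\mathrm{span}_{\M}(X) = X \cup B, \qquad B := \{\text{bridges of } G-X\}.
\]
For every $f \in \mathrm{span}_{\M}(X)$ we have $\mathrm{span}_{\M}(f) \subseteq \mathrm{span}_{\M}(X)$, and $f \in \mathrm{span}_{\M}(f)$. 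So it suffices to pick a set $F \subseteq X \cup B$ of size $O(k)$ such that every element of $B$ lies in $\mathrm{span}_{\M}(f)$ for some $f \in F$. I put all of $X$ into $F$; this costs at most $k$ elements.

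To handle $B$, I will contract the $2$-edge-connected components of $G-X$, which turns $G-X$ into a forest $\mathcal{T}$ whose edges are exactly $B$. Each $x \in X$ joins two nodes of $\mathcal{T}$ in the same tree, so it determines a path $P_x$ in $\mathcal{T}$. For $b \in B$, let $S_b = \{x \in X : b \in P_x\}$. Removing $b$ from $\mathcal{T}$ splits a tree into two sides. The edges of $G$ crossing this vertex cut are exactly $\{b\} \cup S_b$. Because $G$ has no bridges, $S_b \neq \emptyset$.

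The key claim is that $S_b = S_{b'}$ implies $b' \in \mathrm{span}_{\M}(b)$. Take the symmetric difference of the vertex sides defined by $b$ and by $b'$ (chosen in the same tree). This is a vertex set whose coboundary in $G$ is the symmetric difference of the two cuts, namely $\{b,b'\}$. Hence $G-b-b'$ is disconnected, so $b'$ is a bridge of $G-b$, that is, $b' \in \mathrm{span}_{\M}(b)$. Consequently it suffices to put into $F$ one representative $b$ for each distinct set $S_b$.

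The main step is to bound the number of distinct sets $S_b$ by $O(k)$. Let $U$ be the union of the paths $P_x$ inside $\mathcal{T}$; every $b$ with $S_b \ne \emptyset$ lies in $U$. Mark all endpoints of the paths (at most $2k$ nodes) together with every node of degree at least $3$ in $U$. Each component of $U$ is a tree whose leaves are marked, so there are at most $2k$ branching nodes and $O(k)$ marked nodes in total. Cutting $U$ at the marked nodes yields $O(k)$ maximal segments with no internal marked node. Along such a segment no path $P_x$ can start, end, or branch off, so $S_b$ is constant on each segment. Therefore there are $O(k)$ distinct sets $S_b$, and in total $|F| \le k + O(k) = O(k)$, as required. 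I expect this counting step, in particular making the segment argument rigorous, to be the only delicate point; the cut argument in the previous paragraph is routine.
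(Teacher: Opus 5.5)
Your proof is correct and follows essentially the same route as the paper. Both contract the $2$-edge-connected components of $G-X$ into a tree (a forest in your version), cut it at the endpoints of edges of $X$ and at branching nodes into $O(k)$ maximal paths, show that any two edges on the same path form a $2$-edge cut of $G$ and are therefore parallel, and take $X$ together with one representative per path.

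The differences are cosmetic. Your signatures $S_b$ and the symmetric-difference-of-cuts argument repackage the paper's observation that the internal vertices between two edges of such a path form a component of $T-\{e,e'\}$ with no incident edge of $X$. Your use of the standing no-loop assumption (so $G$ has no bridges) replaces the paper's explicit reduction to a connected, $2$-edge-connected graph.
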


\begin{proof}
Let $\M = (E, \I)$ be a cographic matroid represented by a graph $G=(V, E)$. 
If $G$ is not connected, we pick two vertices from different components of $G$ and identify them, 
which results in another graph that also represents $\M$. 
By repeating this procedure, we may assume without loss of generality that $G$ is connected. 

If there exists an element $e \in E$ with $\{e\} \notin \I$, then we remove $e$ from $\M$, 
since such an element does not affect the $O(k)$-coverability. 
Note that such an element $e$ is a bridge (i.e., an edge cut of size $1$) in $G$, and 
removing an element $e$ from $\M$ corresponds to contracting an edge $e$ in $G$. 
By repeating this procedure, we may assume without loss of generality that $G$ is $2$-edge-connected. 

Fix $X \in \I$ with $|X| \leq k$. Then, $X \in \I$ means that 
$X$ is an edge set of $G$ such that $G - X$ is connected. 
We shrink each $2$-edge-connected component in $G-X$ into a single vertex, 
which results in a tree $T=(V', E')$. 
Since $e \in E \setminus X$ is in $\mathrm{span}_{\M}(X)$ if and only if 
$G - X - \{e\}$ is disconnected, it holds that $E' = \mathrm{span}_{\M}(X) \setminus X$. 

Let $\varphi\colon V \to V'$ be the mapping that represents the correspondence between $V$ and $V'$, 
that is, the $2$-edge-connected component containing a vertex $v \in V$ in $G$ is shrunk into a vertex $\varphi(v) \in V'$ in $G'$. 
Define $U \subseteq V'$ as 
\[
U = \{v' \in V' \mid \mbox{$X$ contains an edge incident to $\varphi^{-1}(v')$} \},  
\]
where $\varphi^{-1}$ denotes the inverse image of $\varphi$. 
Then, $|U| \le 2|X| \le 2k$. 
Note that every leaf of $T$ is contained in $U$, because $G$ is $2$-edge-connected.  
Let $W \subseteq V'$ be the set of vertices in $T$ whose degree is at least $3$. 
Since $|W|$ is at most the number of leaves in $T$, it holds that $|W| \le |U| \le 2k$. 

Let ${\cal P}$ be the collection of inclusionwise maximal paths in $T$ whose internal vertices are not in $U \cup W$. 
Since $U \cup W$ contains all the leaves and all vertices of degree at least three in $T$, 
the family $\{E(P) \mid P \in {\cal P}\}$ forms a partition of $E'$, where $E(P)$ denotes the set of edges of $P$. 
Note that $|{\cal P}| = |U| + |W| - 1 = O(k)$. 

For any $P \in {\cal P}$ and for any pair of distinct edges $e, e' \in E(P)$, 
$T - \{e, e'\}$ contains a component whose vertex set $C$ consists only of internal vertices of $P$. 
Then, $\varphi^{-1}(C)$ and $V \setminus \varphi^{-1}(C)$ are disconnected in $G - X - \{e, e'\}$.
Furthermore, since $C$ contains no vertex in $U$, no edge in $X$ is incident to $\varphi^{-1}(C)$, 
which implies that $\varphi^{-1}(C)$ and $V \setminus \varphi^{-1}(C)$ are also disconnected in $G - \{e, e'\}$. 
Therefore, $\{e, e'\} \notin \I$, that is, $e$ and $e'$ are parallel in $\M$. 
For each $P \in {\cal P}$, we pick one element $e_{P}$ from $E(P)$, arbitrarily. 
Then, the above argument shows that 
$E(P) \subseteq \mathrm{span}_{\M}(e_P)$ for each $P \in {\cal P}$. 

Let $F = \{e_{P} \mid P \in {\cal P}\}$. 
Then, $|F| = |{\cal P}| = O(k)$, and 
\[
\mathrm{span}_{\M}(X) = X \cup E' 
= X \cup \bigcup_{P \in {\cal P}} E(P)
= \bigcup_{f \in X \cup F} \mathrm{span}_{\M}(f). 
\]
This shows that cographic matroid is \gkgoodf{$O(k)$}. 
\end{proof}

These lemmas show that
Corollary~\ref{cor:graphic} is derived from Theorem~\ref{thm:gkgood}.

%% file: transversal.tex
\section{Kernelization Algorithm for Transversal Matroids}
\label{sec:transversal}

Let $G = (U, W; E)$ be a bipartite graph with a bipartition $(U, W)$. 
The {\em transversal matroid} represented by $G$ is a matroid $\M = (U, \I)$ over $U$ such that 
$X \subseteq U$ is in $\I$ if and only if $G$ has a matching that covers $X$. 
In this subsection, we extend the result for simple partition matroids (Theorem~\ref{thm:partition}) to transversal matroids and prove Theorem~\ref{thm:transversal}.

\transversal*

\begin{proof}
Let $\F$ be the family of subsets in $U$ such that $\F=\bigcap_{i=0}^{d-1}\I_i$, 
where $\M_0 = (U, \I_0)$ is an arbitrary matroid and $\M_i = (U, \I_i)$ is a transversal matroid for $i \in [d-1]$.
Suppose that $\M_i$ is represented by a bipartite graph $G_i = (U, W_i; E_i)$ for $i \in [d-1]$. 
For $u \in U$ and $i \in [d-1]$, let $\delta_{G_i}(u) \subseteq E_i$ denote the set of edges incident to $u$ in $G_i$. 
Define $V \subseteq U \times E_1 \times \dots \times E_{d-1}$ as 
\[
V = \{(u, e_1, \dots , e_{d-1}) \mid u \in U,\ \mbox{$e_i \in \delta_{G_i}(u)$ for $i \in [d-1]$} \}. 
\]
For $v = (u, e_1, \dots , e_{d-1}) \in V$, we denote $\varphi_0(v) = u, \varphi_1(v) = e_1, \dots , \varphi_{d-1}(v) = e_{d-1}$. 

We define a matroid $\M'_0 = (V, \I'_0)$ as follows: 
a set $\{v_1, \dots , v_t \} \subseteq V$ is in $\I'_0$
if and only if $\varphi_0(v_1), \dots , \varphi_0(v_t)$ are distinct and $\{\varphi_0(v_1), \dots , \varphi_0(v_t)\} \in \I_0$. 
One can see that $\M'_0$ is a matroid, because 
it is obtained from a matroid $\M_0$ 
by splitting each element $u$ in $\M_0$ into multiple parallel copies. 
For $i \in [d-1]$, we define a matroid $\M'_i = (V, \I'_i)$ as follows: 
a set $\{v_1, \dots , v_t \} \subseteq V$ is in $\I'_i$
if and only if 
$\varphi_i(v_1), \dots , \varphi_i(v_t)$ do not share a vertex of $W_i$ in $G_i$. 
Note that $\M'_i$ is a simple partition matroid for $i \in [d-1]$. 
Let $\F'=\bigcap_{i=0}^{d-1}\I'_i$. 
Define a weight function $w'$ on $V$ as $w'(v) = w(\varphi_0(v))$ for $v \in V$. 

We now show the correspondence between $\F$ and $\F'$. 

\begin{claim}
For $X' \in \F'$, the set $\{\varphi_0(x') \mid x' \in X'\}$ is in $\F$. 
Conversely, for $X \in \F$, there exists $X' \in \F'$ such that 
$X = \{\varphi_0(x') \mid x' \in X'\}$.
\end{claim}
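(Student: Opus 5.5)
Both directions follow by unwinding the definitions of $\M'_0,\dots,\M'_{d-1}$. No earlier lemma is needed beyond the definition of a transversal matroid and the construction of $V$.

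For the first direction, take $X' = \{v_1,\dots,v_t\} \in \F'$ and set $X = \{\varphi_0(v_1),\dots,\varphi_0(v_t)\}$.
\begin{itemize}
\item Since $X' \in \I'_0$, the elements $\varphi_0(v_j)$ are pairwise distinct and $X \in \I_0$; in particular $|X| = t$.
\item Fix $i \in [d-1]$. Each $\varphi_i(v_j)$ is an edge of $G_i$ lying in $\delta_{G_i}(\varphi_0(v_j))$, so its $U$-endpoint is $\varphi_0(v_j)$.
\item These $U$-endpoints are distinct, and $X' \in \I'_i$ says the edges $\varphi_i(v_1),\dots,\varphi_i(v_t)$ share no vertex of $W_i$.
\item Hence $\{\varphi_i(v_1),\dots,\varphi_i(v_t)\}$ is a matching of $G_i$ covering $X$, so $X \in \I_i$.
\end{itemize}
As this holds for every $i$, we get $X \in \F$.

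For the converse, take $X = \{u_1,\dots,u_t\} \in \F$ with the $u_j$ distinct.
\begin{itemize}
\item For each $i \in [d-1]$, since $X \in \I_i$, choose a matching $N_i$ of $G_i$ covering $X$. Let $e_i^j \in N_i$ be the unique matching edge incident to $u_j$; it lies in $\delta_{G_i}(u_j)$.
\item Define $v_j = (u_j, e_1^j, \dots, e_{d-1}^j) \in V$ and $X' = \{v_1,\dots,v_t\}$.
\item Then $\varphi_0$ maps $X'$ bijectively onto $X \in \I_0$, so $X' \in \I'_0$.
\item For each $i$, the edges $e_i^1,\dots,e_i^t$ are distinct edges of the matching $N_i$, so they share no vertex of $W_i$. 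Hence $X' \in \I'_i$.
\end{itemize}
Therefore $X' \in \F'$ and $X = \{\varphi_0(x') \mid x' \in X'\}$.

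There is no real obstacle here. The one point needing care is in the first direction: the edges $\varphi_i(v_j)$ must have distinct $U$-endpoints, not merely distinct $W_i$-endpoints, for them to form a matching. This is exactly where the distinctness built into $\I'_0$ is used, together with the fact that $\varphi_i(v)$ is incident to $\varphi_0(v)$ by the definition of $V$.
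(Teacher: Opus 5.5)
Your proof is correct and follows the paper's argument. In one direction the edges $\varphi_i(x')$ form a matching of $G_i$ covering the image set. In the other, $X'$ is assembled from one covering matching per $G_i$, with $e_i^j$ the matching edge at the $j$-th element.

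Your explicit remark that the distinctness built into $\I'_0$ gives the edges $\varphi_i(v_j)$ distinct $U$-endpoints fills in a detail the paper leaves implicit.
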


\begin{proof}[Proof of the claim]
For $X' \in \F'$, let $X = \{\varphi_0(x') \mid x' \in X'\} \subseteq U$. 
Then, $X$ is in $\I_0$ by the definition of $\I'_0$. 
Furthermore, for $i \in [d-1]$, the definition of $\I'_i$ shows that 
the set $\{\varphi_i(x') \mid x' \in X'\} \subseteq E_i$ forms a matching in $G_i$ that covers $X$. 
Therefore, $X$ is in $\F$. 

To show the latter half, let $X = (x_1, \dots , x_t) \in \F$. 
For $i \in [d-1]$, since $X \in \I_i$, $G_i$ has a matching $F_i = \{e_1^i, \dots , e_t^i \} \subseteq E_i$ that covers $X$, 
where we may assume that $e_j^i$ is incident to $x_j$. 
Then, $X' = \{ (x_j, e_j^1, \dots , e_j^{d-1}) \mid j \in [t]\}$ is a subset of $V$. 
By the construction, we have that $X' \in \F'$ and $X = \{\varphi_0(x') \mid x' \in X'\}$. 
This completes the proof. 
\end{proof}

Since $\M'_1, \dots , \M'_{d-1}$ are simple partition matroids,  
by Theorem~\ref{thm:partition}, we can compute a \weakcoreset $R' \subseteq V$ with respect to $\F', w'$, and $k$ of size at most $\tilde{O}(k^d)$. 
We now show that $R = \{ \varphi_0(r') \mid r' \in R' \}$ is a desired \weakcoreset with respect to $\F, w$, and $k$. 
Note that $|R| \le |R'| = \tilde{O}(k^d)$. 

Suppose we are given a set $X = \{x_1, \dots , x_t\} \in \F$ with $t \le k$ as in Definition~\ref{def:weakcoreset}. 
By the above claim, there exists $X' = \{x'_1, \dots , x'_t\} \in \F'$ such that $x_j = \varphi_0(x'_j)$ for $j \in [t]$. 
Since $R'$ is a \weakcoreset with respect to $\F', w'$, and $k$, 
$R'$ contains elements $y'_1, \dots , y'_t$ such that 
$w'(y'_j) \ge w'(x'_j)$ and $\{y'_1, \dots , y'_{j}, x'_{j+1}, \dots , x'_t\} \in \F'$ for any $j \in [t]$. 
Let $y_j = \varphi_0(y'_j)$ for $j \in [t]$. 
Then, $y_j \in R$ and $w(y_j) = w'(y'_j) \ge w'(x'_j) = w(x_j)$ for $j \in [t]$. 
Furthermore, for any $j \in [t]$, $\{y'_1, \dots , y'_{j}, x'_{j+1}, \dots , x'_t\} \in \F'$ implies 
$\{y_1, \dots , y_{j}, x_{j+1}, \dots , x_t\} \in \F$ by the above claim.
Thus, $R$ is a \weakcoreset with respect to $\F, w$, and $k$. 
This completes the proof of Theorem~\ref{thm:transversal}. 
\end{proof}

Since a (non-simple) partition matroid is a special case of a transversal matroid, 
a similar result holds also for partition matroids. 
We say that a matroid $\mathcal{M} = (E, \mathcal{I})$ is a {\em partition matroid} if 
there exist a partition $(E^1, \ldots, E^{\ell})$ of $E$ and a function $c: [\ell] \to \mathbb{Z}_{\geq 0}$ such that
\begin{equation*}
\I = \{ I \subseteq E \mid |I \cap E^j| \leq c(j) \text{ for any } j \in [\ell] \}.
\end{equation*}
A partition matroid is a transversal matroid 
represented by a bipartite graph in which all vertices in each $E^j$ are adjacent 
to the same $c(j)$ common neighboring vertices, for every $j \in [\ell]$.
Therefore, we obtain the following result as a corollary of Theorem~\ref{thm:transversal}.

\begin{corollary}
For the weighted $d$-matroid intersection problem such that, out of the $d$ given matroids, all except one are partition matroids,
we can construct a \weakcoreset of size $\tilde{O}(k^{d})$ in polynomial time.
\end{corollary}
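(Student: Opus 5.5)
The plan is to derive this directly from Theorem~\ref{thm:transversal}. Every partition matroid can be realized as a transversal matroid on the same ground set. Once that is checked, the reduction in the proof of Theorem~\ref{thm:transversal} (splitting into simple partition matroids and then invoking Theorem~\ref{thm:partition}) does the rest. The only real work is to verify the representation and confirm that it preserves polynomial running time and the size bound.

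\textbf{The representation.} Let $\M=(E,\I)$ be a partition matroid given by a partition $(E^1,\dots,E^\ell)$ and capacities $c:[\ell]\to\mathbb{Z}_{\ge 0}$. Build a bipartite graph $G=(E,W;E_G)$ as follows.
\begin{itemize}
\item For each $j\in[\ell]$, introduce $c(j)$ fresh vertices $w^j_1,\dots,w^j_{c(j)}$ in $W$.
\item Join every $e\in E^j$ to all of $w^j_1,\dots,w^j_{c(j)}$.
\end{itemize}
We may cap $c(j)$ at $\min\{c(j),|E^j|\}$ without changing $\I$, so $G$ has polynomial size.

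We then check that $X\subseteq E$ is covered by a matching in $G$ if and only if $|X\cap E^j|\le c(j)$ for all $j$. The vertices of $E^j$ have neighbourhoods disjoint from those of $E^{j'}$ for $j'\ne j$. So a matching covering $X$ splits into independent matchings of $X\cap E^j$ into the $c(j)$ vertices $w^j_1,\dots,w^j_{c(j)}$. Such a matching exists exactly when $|X\cap E^j|\le c(j)$, because these vertices are complete to $E^j$. Hence $\M$ is the transversal matroid represented by $G$.

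One detail concerns elements with $c(j)=0$. Such elements are loops, and the paper assumes loops have been removed. The removal is harmless in any case: an isolated vertex of $G$ has no incident edges, so it contributes no tuples to $V$ in the reduction and never appears in the kernel.

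\textbf{Applying the theorem.} Replace each of the $d-1$ partition matroids by its transversal representation and apply Theorem~\ref{thm:transversal}. This gives a \weakcoreset of size $\tilde{O}(k^d)$ in polynomial time.

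There is no real obstacle here; the main point is simply that the representation and the capping of $c(j)$ keep everything polynomial.
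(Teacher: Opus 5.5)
Your proposal is correct and follows the paper's argument: the paper also observes that a partition matroid is the transversal matroid of a bipartite graph in which each block $E^j$ is completely joined to $c(j)$ common vertices, and then applies Theorem~\ref{thm:transversal}. Your extra details, capping $c(j)$ at $|E^j|$ so the graph stays polynomial and handling blocks with $c(j)=0$, are sound points that the paper leaves implicit.
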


%% file: laminar.tex
\section{Kernelization Algorithm for Laminar Matroids}
\label{sec:laminar}

In this section, we prove Theorem~\ref{thm:laminar}. 
We may assume that $|E| = k^{\Omega(\log k)}$, since otherwise $E$ is trivially a \weakcoreset of size $k^{O(\log k)}$.

Let $E$ be a finite set. 
A set family $\mathcal{L} \subseteq 2^E$ is called a {\em laminar family} over $E$ if 
for every two sets $A$ and $B$ in $\mathcal{L}$ with $A \cap B \neq \emptyset$, 
either $A \subseteq B$ or $B \subseteq A$ holds. 
A matroid $\mathcal{M} = (E, \mathcal{I})$ is called a {\em laminar matroid} if 
there exist a laminar family $\mathcal{L}$ over $E$ and a function $c \colon \mathcal{L} \to \mathbb{Z}_{\ge 0}$
such that 
$$
\mathcal{I} = \{ I \subseteq E \mid |I \cap A| \le c(A) \mbox{ for any } A \in \mathcal{L} \}. 
$$

Recall that, for a matroid $\mathcal{M} = (E, \mathcal{I})$ and for $X \in \mathcal{I}$ and $x \in X$, 
$Y \subseteq E$ is said to be an $(X, x)$-exchangeable set if 
$x \in Y$, and $X -x +y \in \mathcal{I}$ for every $y \in Y$. 
To prove Theorem~\ref{thm:laminar}, we show the following lemma.  

\begin{lemma}
\label{lem:laminarcandidate}
Let $\mathcal{M} = (E, \mathcal{I})$ be a laminar matroid and let $k$ be a positive integer. 
There exists a randomized algorithm that runs in polynomial time and 
returns $\mathcal{Y} \subseteq 2^E$ with $|\mathcal{Y}| = k^{O(\log k)}$ 
such that 
for any $X \in \mathcal{I}$ with $|X| \le k$ and for any $x \in X$, 
$\mathcal{Y}$ contains an $(X, x)$-exchangeable set with probability at least $\frac{1}{k^{O(\log k)}}$.  
\end{lemma}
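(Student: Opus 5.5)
The plan is to reduce exchangeability in a laminar matroid to avoiding a small number of disjoint members of $\mathcal{L}$. Then I would sample such a configuration through a recursion of depth $O(\log k)$, with only $\mathrm{poly}(k)$ choices at each level. This mirrors the partition-matroid sampling of Section~\ref{subsec:partition}.

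\emph{Structural step.} Fix $X\in\I$ with $|X|\le k$ and $x\in X$, and let $X'=X-x$. Call $A\in\mathcal{L}$ \emph{tight} if $|X'\cap A|=c(A)$. Then $X'+y\in\I$ if and only if $y$ lies in no tight set. Since $X\in\I$, no tight set contains $x$. A tight set with $c(A)=0$ consists of loops, which we have removed, so every tight set meets $X'$. Hence the maximal tight sets $B_1,\dots,B_m$ are pairwise disjoint, $m\le k-1$, and each $B_j$ contains an element of $X'$. Consequently $Y$ is $(X,x)$-exchangeable if and only if $x\in Y$ and $Y\cap B_j=\emptyset$ for all $j$. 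Now view $\mathcal{L}\cup\{E\}$ together with all singletons as a rooted tree. The relevant information is the Steiner subtree $S$ spanned by the root, the leaf $x$, and the nodes $B_1,\dots,B_m$. Contract its degree-$2$ vertices. The result is a tree $\hat S$ with $O(k)$ nodes, whose edges correspond to paths in the laminar tree.

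\emph{Sampling step.} I would generate $Y$ by a randomized recursive procedure of depth $O(\log k)$, guided by a centroid decomposition of $\hat S$. At each recursion node the algorithm does two things. First, it guesses the next separator, namely a branching node of $\hat S$, identified only through parameters of size $\mathrm{poly}(k)$: a capacity value $c(A)\le k$, or a count $|X'\cap A|\in[k]$. Second, it guesses which of the at most $k$ hanging branches contain a forbidden $B_j$ and which contains $x$. The candidate set is then cut down accordingly: forbidden branches are removed entirely, and the procedure recurses into the branch holding $x$. Tightness of a set is decided by comparing $c(A)$ with a guessed load in $[k]$. This keeps the guesses independent of the (possibly huge) depth of the laminar tree: a set on a long path of $S$ is tight exactly when its capacity equals the guessed load on that path segment, and the load is constant along a path with no branching.

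\emph{Bounds.} Each level makes $\mathrm{poly}(k)$ guesses, and each is correct with probability $1/\mathrm{poly}(k)$. With $O(\log k)$ levels, the whole run produces an $(X,x)$-exchangeable set with probability $k^{-O(\log k)}$. Moreover, the set of possible outcomes of the guessing tree, over all random choices, has size $k^{O(\log k)}$. I would take $\mathcal{Y}$ to be this collection, or equivalently a sample of that many independent runs. Each run is clearly polynomial-time, given the assumption $|E|=k^{\Omega(\log k)}$.

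\emph{Main obstacle.} The hard step is making the sampled set depend only on $\mathrm{poly}(k)$-valued guesses, rather than on identities of nodes in an unbounded-depth laminar tree. Concretely, I must show that the sets on the path from $x$ to the root that are adjacent to forbidden subtrees can be located by guessed loads and capacities alone. I would first try to prove that along any branching-free segment of $S$ the load $|X'\cap A|$ is constant. The tight sets on such a segment are then exactly those with a prescribed capacity, and a single guessed integer per segment of the centroid decomposition suffices.
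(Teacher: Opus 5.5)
Your structural step is sound in the direction you need: the maximal tight sets $B_1,\dots,B_m$ are pairwise disjoint, $m\le k-1$, and any set that contains $x$ and avoids every $B_j$ is $(X,x)$-exchangeable. The gap is the sampling step. That step is the whole content of the lemma, and you leave it open as the ``main obstacle''. Moreover, the outline you sketch for it cannot be completed as stated.

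First, the tree $\hat S$ and its centroids are determined by $X$ and $x$, which the algorithm never sees. A node of the laminar tree also cannot be located from a capacity value (many sets share one) or from a load $|X'\cap A|$ (unknown to the algorithm). More fundamentally, you propose to take $\mathcal{Y}$ to be the $k^{O(\log k)}$ outcomes of a tree of $\mathrm{poly}(k)$-valued guesses. Enumerating them would give a \emph{deterministic} family that works for every $(X,x)$, and no such family of size bounded in terms of $k$ exists. Already for $k=2$, let $\mathcal{L}$ consist of $E$ with capacity $2$ and $\ell$ disjoint two-element sets $E^1,\dots,E^\ell$ with capacity $1$, and fix $r_j\in E^j$. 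For $a\neq b$, take $x=r_a$ and $X=\{r_a,x'\}$ with $E^b=\{r_b,x'\}$. Then some member of $\mathcal{Y}$ must contain $r_a$ but not $r_b$. Hence the incidence vectors of $r_1,\dots,r_\ell$ over $\mathcal{Y}$ form an antichain, and Sperner's theorem gives $|\mathcal{Y}|\ge\log_2\ell$, which is unbounded. So ``discard the forbidden branches'' at a node with arbitrarily many children cannot be a bounded guess. It requires $X$-oblivious random sampling over all the parts, and then the outcome space is no longer $k^{O(\log k)}$.

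The missing idea, which the paper uses, is to recurse on capacity instead of on the unknown tree $\hat S$; capacity is at most $k$ and visible to the algorithm.
\begin{itemize}
\item For a disjoint family $\mathcal{Z}$ of height $h$, let $\mathcal{Z}_i$ be the partition of $E(\mathcal{Z})$ into the maximal members of $\mathcal{L}$ of capacity at most $i$ lying inside members of $\mathcal{Z}$.
\item Keep each part independently with probability $1/k$. With probability $\Omega(1/k)$ this retains the part containing $x$ and drops the at most $k-1$ parts meeting $X-x$, exactly as in the partition case.
\item Do this twice, with thresholds $i$ and then $h-i$, and try every $i\in\{\lceil h/2\rceil,\dots,h-1\}$ deterministically.
\item The invariant is that the family is ``good'': only the $x$-part $Z_0$ meets $X$, and no tight set contains a member. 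The progress measure is $\phi$, the largest capacity of a tight set inside $Z_0$.
\end{itemize}
For $i=\max\{\lceil h/2\rceil,\phi\}$ one gets $\phi(\mathcal{Z}')\le h-i$. In the case $i=\phi$ this follows from $|(X-x)\cap Z'_0|\le c(Z_0)-c(A^*)\le h-\phi$, where $A^*$ is a tight set of capacity $\phi$ in $\mathcal{Z}_i$ disjoint from $Z'_0$. Hence the second sampling also preserves goodness, while the height drops to at most $h/2$. The recursion therefore has depth $\log_2 k$ with at most $k$ branches per level. This gives $|\mathcal{Y}|\le k^{\log_2 k}$ and success probability $k^{-O(\log k)}$, with all randomness residing in the part sampling.
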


\begin{proof}
    Suppose that $\mathcal{M}$ is represented by a laminar family $\mathcal{L} \subseteq 2^E$ and 
    a function $c \colon \mathcal{L} \to \mathbb{Z}_{\ge 0}$. 
    Since we only consider independent sets of size at most $k$, 
    we may assume that $E \in \mathcal{L}$ and $c(E) = k$. 
    Since we assume that the matroid contains no loop, it holds that $c(A)>0$ for any $A \in \mathcal{L}$. 
    Furthermore, we may assume that $\{e\} \in \mathcal{L}$ and $c(\{e\}) =1$ for every element $e \in E$, 
    because $\mathcal{L}$ remains a laminar family even after adding $\{e\}$. 
    We may also assume that if $A \subseteq B$ for $A, B \in \mathcal{L}$, then $c(A) \le c(B)$, 
    since otherwise, we can remove $A$ from $\mathcal{L}$. 

    Let $\mathcal{Z} \subseteq \mathcal{L}$ be a disjoint set family, i.e., a set family whose members are mutually disjoint. 
    We denote 
    \[
        {\rm height}(\mathcal{Z}) = \max \{c(A) \mid A \in \mathcal{Z}\}, \qquad 
    E(\mathcal{Z}) = \bigcup_{Z \in \mathcal{Z}} Z.
    \]    
    For a positive integer $i \le {\rm height}(\mathcal{Z})$, let $\mathcal{Z}_i$ be the collection of 
    inclusionwise maximal sets in $\{A \in \mathcal{L} \mid \mbox{$\exists Z \in \mathcal{Z}$ with $A \subseteq Z$,\ $c(A) \le i$} \}$, 
    which forms a refined partition of $E(\mathcal{Z})$.

    To show the lemma, we consider a recursive algorithm \texttt{FindCandidate}, 
    whose input is a disjoint set family $\mathcal{Z} \subseteq \mathcal{L}$. 
    Let $h = {\rm height}(\mathcal{Z})$. 
    When $h = 1$, the algorithm simply outputs $\{E(\mathcal{Z}) \}$.  
    When $h \ge 2$, the algorithm performs the following procedure for each 
    $i \in \{\lceil \frac{h}{2} \rceil, \dots, h-1\}$ and outputs the union of the set families obtained.
    \begin{itemize}
    \item
     Sample each set in $\mathcal{Z}_{i}$ independently with probability $\frac{1}{k}$ to obtain $\mathcal{Z}'$. 
    \item
     Sample each set in $\mathcal{Z}'_{h-i}$ independently with probability $\frac{1}{k}$ to obtain $\mathcal{Z}''$. 
    \item
    Execute \texttt{FindCandidate} for $\mathcal{Z}''$ to obtain a set family.
    \end{itemize}
    The pseudocode for this algorithm is presented in Algorithm~\ref{alg:findcandidate}.

\begin{algorithm} 
    $h \gets {\rm height}(\mathcal{Z})$; \\ 
    \If{$h=1$} { 
        \Return $\{E(\mathcal{Z})\}$\;
    }
    \Else{
        $\mathcal{Y} \gets \emptyset$\; 
        \For{$i=\lceil \frac{h}{2} \rceil$ \textup{to} $h-1$}{
            $\mathcal{Z}', \mathcal{Z}'' \gets \emptyset$\;  
            \For{\textup{each} $Z \in \mathcal{Z}_{i}$}
                {
                $\mathcal{Z}' \gets \mathcal{Z}' \cup \{Z\}$ with probability $\frac{1}{k}$\;
                }
            \For{\textup{each} $Z \in \mathcal{Z}'_{h-i}$}
                {
                $\mathcal{Z}'' \gets \mathcal{Z}'' \cup \{Z\}$ with probability $\frac{1}{k}$\;
                }
            $\mathcal{Y} \gets \mathcal{Y} \cup \texttt{FindCandidate} (\mathcal{Z}'')$\;  
        }
        \Return $\mathcal{Y}$\;
    }    
    \caption{\texttt{FindCandidate}$(\mathcal{Z})$}\label{alg:findcandidate}
\end{algorithm}
   
    In what follows, we show that \texttt{FindCandidate}$(\{E\})$ returns a set family satisfying the conditions in the lemma. 

    \begin{claim}
        For a disjoint set family $\mathcal{Z} \subseteq \mathcal{L}$ with ${\rm height}(\mathcal{Z}) = h$, 
        {\rm \texttt{FindCandidate}}$(\mathcal{Z})$ returns a set family of size at most $k^{\log_2 h}$. 
    \end{claim}

    \begin{proof}[Proof of the claim]
    We prove the claim by induction on $h$. 
    If $h = 1$, then the size of the output of \texttt{FindCandidate}$(\mathcal{Z})$ is $1$, which is equal to $k^{\log_2 1}$.      
    Suppose that $h \ge 2$. 
    Then, for $i=\lceil \frac{h}{2} \rceil, \dots , h-1$, 
    \texttt{FindCandidate}$(\mathcal{Z})$ calls \texttt{FindCandidate}$(\mathcal{Z}'')$, where ${\rm height}(\mathcal{Z}'') \le h-i$. 
    Therefore, the induction hypothesis shows that the size of the output of \texttt{FindCandidate}$(\mathcal{Z})$ is at most 
    \[
       \sum_{i=\lceil \frac{h}{2} \rceil}^{h-1}  k^{\log_2 h-i} \le \sum_{i=\lceil \frac{h}{2} \rceil}^{h-1} k^{\log_2 \frac{h}{2}} \le \frac{h}{2} \cdot k^{(\log_2 h) -1} \le k \cdot k^{(\log_2 h) -1} = k^{\log_2 h}, 
    \]
    which completes the proof. 
   \end{proof}
    This implies that \texttt{FindCandidate}$(\{E\})$ returns a set family of size at most $k^{O(\log k)}$.

    We fix $X \in \mathcal{I}$ with $|X| \le k$ and $x \in X$. 
    A set $A \in \mathcal{L}$ is said to be {\em tight} if $|(X - x) \cap A| = c(A)$. 
    Note that if an element $y \in E$ is contained in no tight set, then $X-x+y \in \I$. 
    For a disjoint set family $\mathcal{Z} \subseteq \mathcal{L}$, we say that $\mathcal{Z}$ is {\em good} if 
    the following conditions hold. 
    \begin{enumerate}
        \item There exists a set $Z_0 \in \mathcal{Z}$ such that $x \in Z_0$ and $X \cap (E(\mathcal{Z}) \setminus Z_0) = \emptyset$. 
        \item For any $Z \in \mathcal{Z}$, there exists no tight set containing $Z$. 
    \end{enumerate}
    Note that the first condition implies that each $Z \in \mathcal{Z}$ is not tight.  
    Indeed, if $Z = Z_0$, then $|(X-x) \cap Z| = |X \cap Z| - 1 \le c(Z) -1$; and 
    if $Z \in \mathcal{Z} \setminus \{Z_0\}$, then $|(X-x) \cap Z| = 0 < c(Z)$. 
    Therefore, the second condition is equivalent to the nonexistence of a tight set that strictly contains some $Z \in \mathcal{Z}$. 
    The set $Z_0$ that satisfies the first condition is called an {\em $x$-containing set}. 
    For a good set family $\mathcal{Z}$ with the $x$-containing set $Z_0$, define 
    \[
    \phi(\mathcal{Z}) = \max\{c(A) \mid \mbox{$A \in \mathcal{L}$ is a tight set contained in $Z_0$} \},  
    \]
    where we denote $\phi(\mathcal{Z}) = - \infty$ if no such $A$ exists. 
    Since $Z_0$ is not tight as observed above, we obtain $\phi(\mathcal{Z}) \le c(Z_0)-1 \le {\rm height}(\mathcal{Z}) -1$.

   \begin{claim}
        Let $\mathcal{Z} \subseteq \mathcal{L}$ be a good set family 
        and let $i$ be a positive integer with $\phi(\mathcal{Z}) \le i \le {\rm height}(\mathcal{Z})$. 
        Let $\mathcal{Z}'$ be a set family obtained by sampling 
        each set in $\mathcal{Z}_{i}$ independently with probability $\frac{1}{k}$ (as in lines 8--9 in Algorithm~\ref{alg:findcandidate}). 
        Then, $\mathcal{Z}'$ is good with probability at least $\Omega(\frac{1}{k})$.
   \end{claim}

    \begin{proof}[Proof of the claim]        
        Observe that $\mathcal{Z}_i$ contains a unique set $Z'_0$ with $x \in Z'_0$ and 
        at most $k-1$ other sets intersecting with $X$. 
        We see that $\mathcal{Z}'$ satisfies the first condition of a good set family if and only if
        $Z'_0$ is added to $\mathcal{Z}'$ and no other sets intersecting with $X$ are added to $\mathcal{Z}'$. 
        This happens with probability at least 
        \[
            \frac{1}{k} \cdot \left(1 - \frac{1}{k}\right)^{k-1} = \Omega \left( \frac{1}{k} \right). 
        \]

        We show that $\mathcal{Z}'$ satisfies the second condition of a good set family, 
        assuming that it satisfies the first condition.
        Fix $Z' \in \mathcal{Z}'$ and  
        let $Z \in \mathcal{Z}$ be the set with $Z' \subseteq Z$. 
        Since $\mathcal{L}$ is a laminar family, a set $A \in \mathcal{L}$ containing $Z'$ satisfies that 
        $A = Z'$, $Z' \subsetneq A \subsetneq Z$, or $Z \subseteq A$. 
        Then, we can show that there exists no tight set containing $Z'$ as follows. 
        \begin{itemize}
            \item  Since $\mathcal{Z}'$ satisfies the first condition of a good set family, $Z'$ is not tight. 
            \item  Any set $A \in \mathcal{L}$ with $Z' \subsetneq A \subsetneq Z$ is not tight as follows: 
                \begin{itemize}
                    \item if $Z$ is not the $x$-containing set of $\mathcal{Z}$, 
                    then $X \cap Z = \emptyset$, and hence $|(X-x) \cap A| = 0 < c(A)$, and  
                    \item if $Z$ is the $x$-containing set of $\mathcal{Z}$, then $c(A) > i \ge \phi(\mathcal{Z})$ by the definition of $\mathcal{Z}_i$, and hence $A$ is not tight by the maximality of $\phi({\mathcal{Z}})$. 
                \end{itemize}
            \item Since $\mathcal{Z}$ is good, there exists no tight set containing $Z$. 
        \end{itemize}
        Thus, $\mathcal{Z}'$ satisfies the second condition. 

        Therefore, $\mathcal{Z}'$ is good with probability at least $\Omega(\frac{1}{k})$. 
    \end{proof}

    \begin{claim}
        Suppose that $\mathcal{Z} \subseteq \mathcal{L}$ is a good set family with ${\rm height}(\mathcal{Z}) = h \ge 2$. 
        Then, there exists $i \in \{ \lceil \frac{h}{2} \rceil, \dots , h-1\}$ such that 
        the set family $\mathcal{Z}''$ obtained by applying Lines 7--11 in Algorithm~\ref{alg:findcandidate} is good 
        with probability at least $\Omega(\frac{1}{k^2})$. 
    \end{claim}

    \begin{proof}[Proof of the claim]  
        We first consider the case where $\phi(\mathcal{Z}) < \lceil \frac{h}{2} \rceil$. 
        Let $\mathcal{Z}'$ and $\mathcal{Z}''$ be the set families
        obtained by applying Lines 7--11 in Algorithm~\ref{alg:findcandidate} for $i= \lceil \frac{h}{2} \rceil$. 
        Since $\phi(\mathcal{Z}) \le i$, the previous claim shows that $\mathcal{Z}'$ is good with probability at least $\Omega(\frac{1}{k})$. 
        Under the assumption that $\mathcal{Z}'$ is good, 
        since $\phi(\mathcal{Z}') \le \phi(\mathcal{Z}) \le \lceil \frac{h}{2} \rceil - 1 \le h - i$, 
        the previous claim shows that $\mathcal{Z}''$ is good with probability at least $\Omega(\frac{1}{k})$.
        By combining them, $\mathcal{Z}''$ is good with probability at least $\Omega(\frac{1}{k^2})$.

        We next consider the case where $\phi(\mathcal{Z}) \ge \lceil \frac{h}{2} \rceil$. 
        Let $\mathcal{Z}'$ and $\mathcal{Z}''$ be the set families
        obtained by applying Lines 7--11 in Algorithm~\ref{alg:findcandidate} for $i= \phi(\mathcal{Z})$. 
        Then, the previous claim shows that $\mathcal{Z}'$ is good with probability at least $\Omega(\frac{1}{k})$. 
        Suppose that $\mathcal{Z}'$ is good and let $Z'_0 \in \mathcal{Z}'$ be the $x$-containing set of $\mathcal{Z}'$. 
        By the definition of $\phi(\mathcal{Z})$, there exists a tight set $A^* \in \mathcal{L}$ such that $c(A^*) = \phi(\mathcal{Z})$ and $A^* \subseteq Z_0$, where $Z_0 \in \mathcal{Z}$ is the $x$-containing set of $\mathcal{Z}$. 
        By taking an inclusionwise maximal set among such sets $A^*$, we may assume that $A^* \in \mathcal{Z}_i$.
        Since $Z'_0$ is also a member of $\mathcal{Z}_i$, $Z'_0$ and $A^*$ are disjoint sets. 
        Note that $Z'_0 \neq A^*$ follows from the fact that $A^*$ is tight, while $Z'_0$ is not tight.  
        Therefore, 
        \[
            |(X-x) \cap Z'_0| \le |(X-x) \cap Z_0| - |(X-x) \cap A^*| \le c(Z_0) - c(A^*) \le h - \phi(\mathcal{Z}). 
        \]
        Then, for any tight set $A$ contained in $Z'_0$, we obtain
        \[
        c(A) = |(X-x) \cap A| \le |(X-x) \cap Z'_0| \le h-\phi(\mathcal{Z}).
        \]
        Since this implies that $\phi(\mathcal{Z}') \le h-\phi(\mathcal{Z}) = h-i$,  
        the previous claim shows that $\mathcal{Z}''$ is good with probability at least $\Omega(\frac{1}{k})$.
        By combining them, $\mathcal{Z}''$ is good with probability at least $\Omega(\frac{1}{k^2})$.
    \end{proof}

   \begin{claim}
        For a good set family $\mathcal{Z} \subseteq \mathcal{L}$ with ${\rm height}(\mathcal{Z}) = h$, 
        {\rm \texttt{FindCandidate}}$(\mathcal{Z})$ returns a set family 
		that contains an $(X, x)$-exchangeable set with probability at least $\frac{1}{k^{2 \log_2 h}}$.  
    \end{claim}

    \begin{proof}[Proof of the claim]
        We prove the claim by induction on $h$. 
        Suppose that $h=1$. 
        In this case, for any $Z \in \mathcal{Z}$ and for any $y \in Z$, we see that there exists no tight set containing $y$ as follows: for any $A \in \mathcal{L}$ with $Z \subseteq A$, $A$ is no tight as $\mathcal{Z}$ is good; and for any $A \in \mathcal{L}$ with $y \in A \subsetneq Z$, 
        $|(X-x) \cap A| \le |(X-x) \cap Z| < c(Z) = 1 = c(A)$. 
        Therefore, for any $y \in E(\mathcal{Z})$, we obtain $X - x + y \in \mathcal{I}$. 
        We also see that $x \in E(\mathcal{Z})$, 
        because $\mathcal{Z}$ is good. 
        Therefore, $E(\mathcal{Z})$ is an $(X, x)$-exchangeable set. 
        This shows that the output $\{E(\mathcal{Z})\}$ of \texttt{FindCandidate}$(\mathcal{Z})$ is a set family that contains an $(X, x)$-exchangeable set with probability $1$. This probability is written as $\frac{1}{k^{2 \log_2 h}}$.

        Suppose next that $h \ge 2$. In this case, by the previous claim, 
        there exists $i \in \{ \lceil \frac{h}{2} \rceil, \dots , h-1\}$ such that 
        the set family $\mathcal{Z}''$ obtained by applying Lines 7--11 in Algorithm~\ref{alg:findcandidate} is good 
        with probability at least $\Omega(\frac{1}{k^2})$. 
        Under the assumption that such a set family $\mathcal{Z}''$ is good, since ${\rm height}(\mathcal{Z}'') \le \frac{h}{2}$,  
        the output of \texttt{FindCandidate}$(\mathcal{Z}'')$ 
        contains an $(X, x)$-exchangeable set with probability at least $\frac{1}{k^{2 \log_2 (h/2)}}$ by the induction hypothesis.
        Since the output of \texttt{FindCandidate}$(\mathcal{Z})$ contains that of \texttt{FindCandidate}$(\mathcal{Z}'')$, 
        it contains an $(X, x)$-exchangeable set with probability at least 
        \[
        \frac{1}{k^2} \cdot \frac{1}{k^{2 \log_2 (h/2)}} = \frac{1}{k^{2 \log_2 h}}.
        \]
        This completes the proof of the claim. 
    \end{proof}

    Since $\{E\} \subseteq \mathcal{L}$ is a good set family with ${\rm height}(\{E\}) = k$, by the above claims, 
    \texttt{FindCandidate}$(\{E\})$
    returns $\mathcal{Y} \subseteq 2^E$ with $|\mathcal{Y}| = k^{O(\log k)}$ such that 
    $\mathcal{Y}$ contains an $(X, x)$-exchangeable set with probability at least $\frac{1}{k^{O(\log k)}}$.  
    This completes the proof of Lemma~\ref{lem:laminarcandidate}.       
    \end{proof}

    We are now ready to prove Theorem~\ref{thm:laminar}, which we restate here. 

    \laminar*

    \begin{proof}
        For $i\in [d-1]$, let $\mathcal{M}_i = (E, \mathcal{I}_i)$ be a laminar matroids, and 
        let $\mathcal{M}_0 = (E, \mathcal{I}_0)$ be an arbitrary matroid. 
        Suppose that $\mathcal{F} = \bigcap_{i=0}^{d-1}\I_i$. 
        
        For $i \in [d-1]$, we apply Lemma~\ref{lem:laminarcandidate} to a laminar matroid $\mathcal{M}_i$ 
        to obtain a set family $\mathcal{Y}_i \subseteq 2^E$. 
        Using $\mathcal{Y}_1, \dots , \mathcal{Y}_{d-1}$, we construct the set family
        \[
            \mathcal{Y} = \bigg\{ \bigcap_{i=1}^{d-1} Y_i \mid Y_i \in \mathcal{Y}_i \mbox{ for } i \in [d-1] \bigg\}.         
        \]
        Then, for each $Y \in \mathcal{Y}$, we apply $\texttt{Greedy}(\M_0, Y, k)$ and add its output to $R$. 
        We repeat this process $T = k^{\Theta(d \log k)}$ times and returns the obtained set $R$. 
        The pseudocode for the algorithm is shown in Algorithm~\ref{alg:laminarkernel}. 

    \begin{algorithm} 
    $R \gets \emptyset$\; 
    \RepTimes{$T$}{
        \For{$i=1$ \textup{to} $d-1$}{
            Apply Lemma~\ref{lem:laminarcandidate} to $\mathcal{M}_i$ to obtain $\mathcal{Y}_i$\; 
        }
        $\mathcal{Y} \gets \big\{ \bigcap_{i=1}^{d-1} Y_i \mid Y_i \in \mathcal{Y}_i \mbox{ for } i \in [d-1] \big\}$\; 
        \For{\textup{each} $Y \in \mathcal{Y}$}
            {
                $R' \gets \texttt{Greedy}(\M_0, Y, k)$\; 
                $R \gets R \cup R'$\; 
            }
     }
    \Return $R$\;
    \caption{\texttt{LaminarKernel}$(\mathcal{M}_0, \dots , \mathcal{M}_{d-1})$}\label{alg:laminarkernel}
    \end{algorithm}

        We first show that Algorithm~\ref{alg:laminarkernel} returns a set $R$ with $|R| = k^{O(d \log k)}$. 
        Since each $\mathcal{Y}_i$ contains $k^{O(\log k)}$ sets, $\mathcal{Y}$ contains at most $\prod_{i=1}^{d-1} |\mathcal{Y}_i| = k^{O(d \log k)}$ sets. 
        Since $\texttt{Greedy}(\M_0, Y, k)$ returns a set of size at most $k$, 
        one execution of lines 3--8, called a {\em round}, adds at most $k \cdot k^{O(d \log k)}$ elements to $R$. 
        Since we execute $T = k^{\Theta(d \log k)}$ rounds, the output $R$ satisfies that 
        $|R| = k \cdot k^{O(d \log k)} \cdot k^{\Theta(d \log k)}$, which is rewritten as $|R| = k^{O(d \log k)}$. 

        We next show that the output $R$ satisfies (SingleEXC).  
        Fix $X \in \mathcal{F}$ with $|X| \le k$ and $x \in X$. 
        We say that a round is {\em successful} 
        if $\mathcal{Y}_i$ contains an $(X, x)$-exchangeable set with respect to $\mathcal{M}_i$ for every $i \in [d-1]$.     
        By Lemma~\ref{lem:laminarcandidate}, a round is successful with probability at least $\frac{1}{k^{O(d \log k)}}$.     
        Therefore, if we execute $k^{\Theta(d \log k)}$ rounds (with an appropriately large hidden constant), 
        then there exists at least one successful round with probability at least $1 - \frac{1}{3k}$.

        We now show that if there exists at least one successful round, then 
        there exists an element $y$ satisfying the conditions in (SingleEXC). 
        Consider $\mathcal{Y}_1, \dots , \mathcal{Y}_{d-1}$ in a successful round. 
        For $i \in [d-1]$, let $Y^*_i \in \mathcal{Y}_i$ be an $(X, x)$-exchangeable set with respect to $\mathcal{M}_i$.   
        Then, $\mathcal{Y}$ contains $Y^* := \bigcap_{i=1}^{d-1} Y^*_i$. 
        By Lemma~\ref{lem:exchangeablesetgreedy}, the output $R^*$ of $\texttt{Greedy}(\M_0, Y^*, k)$ contains an element $y$
        such that $y \notin X-x$, $w(y) \ge w(x)$, and $X -x + y \in \mathcal{F}$.
        Since $R^*$ is a subset of the output $R$ of Algorithm~\ref{alg:laminarkernel}, $y$ is a desired element in $R$. 

        Since this argument works for any $X$ and $x$, the output $R$ satisfies (SingleEXC), 
        which shows that it is a \weakcoreset. 
        Moreover, the algorithm runs in polynomial time in $|E|$, because we assume that $|E| = k^{\Omega(\log k)}$. 
\end{proof}

%% file: matroidalmatching.tex
\section{Matching with a Matroid Constraint} \label{sec:matroidal_matching}

Recall that in the weighted matching problem with a matroid constraint, 
we are given a graph $G = (V, E)$, a matroid $\M = (E, \I)$ defined on the edge set of $G$, and a weight function $w: E\to \mathbb{R}_{\ge 0}$. 
The goal of the problem is to find a matching of maximum weight that is independent in $\M$. 
Let $\F = \{X \subseteq E \mid X \text{ is a matching and } X \in \I \}$.
Let $k$ be a parameter that represents an upper bound on the solution size $|X|$. 
In this section, we prove Theorem~\ref{thm:matroidalmatching}, which we restate here. 

\matroidalmatching*

\begin{proof}

Our algorithm for computing a \weakcoreset for the weighted matching problem with a matroid constraint is based on the same ideas as 
those in Section~\ref{subsec:partition}.

\begin{algorithm}
    \caption{\texttt{MatroidalMatchingKernel}$(G = (V, E), \M = (E, \I))$}\label{alg:matroidalmatchingkernel}
    $R \gets \emptyset$; \\
    \RepTimes{$T$}{
        Let $S \subseteq V$ be a subset obtained by independently sampling each vertex $v \in V$ with probability $\frac{1}{2k}$;  \\
        $R' \gets \texttt{Greedy}(\M, E(S) , k)$; \\
        \tcp{$E(S)$ is the set of edges whose both endpoints are in $S$.}
        $R \gets R \cup R'$; \\
    }
    \Return $R$;
\end{algorithm}

In our algorithm, we repeat the following procedure $T = \tilde{O}(k^2)$ times. 
We sample each vertex $v \in V$ independently with probability $\frac{1}{2k}$ to obtain a random subset $S\subseteq V$.
Then we apply $\texttt{Greedy}(\M, E(S), k)$, where $E(S)$ denotes the set of edges whose both endpoints are contained in $S$.
We denote the resulting set by $R'$ and add it to our \weakcoreset $R$.
The pseudocode for this algorithm is shown in Algorithm~\ref{alg:matroidalmatchingkernel}.

For a graph $G = (V, E)$ and for a matching $X \subseteq E$ in $G$ and $x \in X$, 
we say that $Y \subseteq E$ is an {\em $(X, x)$-exchangeable set with respect to $G$} if 
$x \in Y$ and $X - x + y$ is a matching in $G$ for every $y \in Y$. 

We will prove that the obtained set $R$ satisfies~(SingleEXC).
Let $X \in \mathcal{F}$ be a feasible solution of size at most $k$, and $x$ be an arbitrary element in $X$.

Similarly to the proof of Theorem~\ref{thm:partition}, consider one round in the algorithm where we generate the sets $S$ by sampling.
We say that the round is {\em successful} if 
$E(S)$ is an $(X, x)$-exchangeable set with respect to $G$.
We now evaluate the probability that one round is successful. 

Let $u$ and $v$ be the endpoints of $x$. 
Observe that 
$E(S)$ is an $(X, x)$-exchangeable set with respect to $G$  
if 
$u \in S$, $v \in S$, and the set $S$ contains no endpoint of any edge $e \in X - x $. 
Hence, the probability that $E(S)$ is an $(X, x)$-exchangeable set is at least 
\begin{equation*}
\frac{1}{(2k)^2} \left( 1 -  \frac{1}{2k} \right)^{2k - 2} \geq 
\frac{1}{4ek^2},
\end{equation*}
where we use the fact that $(1 - 1/\alpha)^{\alpha - 2} \geq 1/e$ for $\alpha > 1$. 
Therefore, the probability that one round is successful is at least $\frac{1}{4ek^2}  $.
By setting $T = \lceil 4e k^2 \log(3k) \rceil$,  
the probability that there exists no successful round among the $T$ rounds is at most 
\begin{equation*}
\begin{aligned}
&\left( 1 -  \frac{1}{4e k^2}  \right)^T \leq \frac{1}{3k}, 
\end{aligned}
\end{equation*}
where we use $(1 - 1/\alpha)^{\alpha} \leq 1/e$ for $\alpha \ge 1$. 
Therefore, with probability at least $1 - \frac{1}{3k}$, there exists at least one successful round.

In the same way as the proof of Theorem~\ref{thm:partition}, we show that $R$ satisfies (SingleEXC).
Let $R'$ be the output of $\texttt{Greedy}(\M_0, E(S), k)$ for the sets $S$ generated in a successful round. 
Since $x \in E(S)$, by Lemma~\ref{lem:greedyproperty}, there exists $y \in R' \setminus (X-x)$ such that $w(y) \geq w(x)$ and $X - x + y \in \I$.
Now, since $E(S)$ is an $(X, x)$-exchangeable set with respect to $G$, we have $X - x + y$ is a matching.
Thus, in a successful round, there exists $y \in R' \setminus (X-x)$ such that $w(y) \geq w(x)$ and $X - x + y \in \F$.
Therefore, $R$ satisfies (SingleEXC).

Moreover, Algorithm~\ref{alg:matroidalmatchingkernel} runs in polynomial time, and the size of the set $R$ returned by Algorithm~\ref{alg:matroidalmatchingkernel} is at most 
$k \cdot T = O\left(k^3 \log k \right)$.
\end{proof}

%% file: deterministic.tex
\section{Deterministic Reachable Kernel}
\label{sec:deterministic}

In this section, we discuss deterministic algorithms for finding a reachable kernel. 
Recall that a subset $R \subseteq E$ is a {\rm \coreset} if 
it satisfies the conditions in Definition~\ref{def:weakcoreset} with probability $1$. 
Equivalently, $R \subseteq E$ is a {\rm \coreset} if it always satisfies the conditions in (SingleEXC), 
which is described as follows. 

\begin{description}
   \item[(SingleEXC)']
    For any $X \in \mathcal{F}$ with $|X| \le k$ and any $x \in X$, there exists $y \in R \setminus (X-x)$ such that 
    $w(y) \ge w(x)$ and $X - x + y \in \mathcal{F}$.
\end{description}

We now prove Theorem~\ref{thm:deterministic}, which we restate here. 

\deterministic*

\begin{algorithm}
 \caption{\texttt{DeterministicKernel}$(\mathcal{M}_0, \mathcal{M}_1)$}\label{alg:strongkernel}
    $R \gets \emptyset$\;
    $U \gets E$\; 
    \RepTimes{$g(k)+1$}{
        $\M_i \gets \M_i | U~(i=0,1)$\; 
        $F \gets \texttt{Greedy} (\M_0, U, k)$\;
        $R\gets R\cup F$\;
        \For{{\rm each} $e\in F$}{
          $H\gets \texttt{Greedy} (\M_0, \textrm{span}_{\M_1} (e), k)$\;
          $R \gets R\cup H$\;
        }
        $U\gets E\setminus \bigcup_{e\in F}\textrm{span}_{\M_1} (e)$\;
      }
    \Return $R$\;
\end{algorithm}

\begin{proof}
Let $\M_0=(E, \I_0)$ be an arbitrary matroid and $\M_1=(E, \I_1)$ be a \gkgood matroid.
We may assume that the rank of $\M_0$ is equal to $k$ by applying truncation if necessary.
Let $\F=\I_0\cap \I_1$.
We show that the output $R$ of Algorithm~\ref{alg:strongkernel} is a \coreset of size at most $k^2 (g(k)+1)$. 

We denote the sets $F$, $H$ obtained in the $t$-th iteration by $F_t$, $H_t$, respectively.
Then $R=\bigcup_{t} (F_t\cup H_t)$.
Also, we denote $U_t=E\setminus \bigcup_{t' < t} \bigcup_{e\in F_{t'}}\textrm{span}_{\M_1} (e)$, which is the set of remaining elements at the beginning of the $t$-th iteration.

To show that $R$ satisfies (SingleEXC)', let $X$ be a feasible solution of size at most $k$, and $x$ be an element in $X$.
We will show that, for some $t$, there exists $y\in (F_t\cup H_t) \setminus (X-x)$ such that $w(y)\geq w(x)$ and $X-x+y\in \F$.

First suppose that there exists $e\in F_t$ such that $x\in \textrm{span}_{\M_1}(e)$.
Then, since $\textrm{span}_{\M_1}(e)$ is an $(X, x)$-exchangeable set with respect to $\M_1$, 
it follows from Lemma~\ref{lem:exchangeablesetgreedy} that 
the output $H_t$ of $\texttt{Greedy}(\M_0, \textrm{span}_{\M_1}(e), k)$ contains an element $y$ such that  
$y \notin X-x$, $w(y) \ge w(x)$, and $X-x+y \in \F$.
Since $H_t \subseteq R$, this means that $y$ satisfies the conditions in (SingleEXC)'.

In what follows, we assume that, for any $t$, it holds that $x\not\in \textrm{span}_{\M_1}(e)$ for any $e\in F_t$.
Then, $x \in U_t$ for any $t$. 
Since $F_t=\texttt{Greedy}(\M_0, U_t, k)$,
it follows from Lemma~\ref{lem:greedyproperty} that there exists an element $y_t \in F_t \setminus (X-x)$ such that $w(y_t) \ge w(x)$ and $X-x+y_t \in \mathcal{I}_0$.     
Since $U_t=U_{t-1}\setminus \bigcup_{e\in F_t}\textrm{span}_{\M_1} (e)$, we have $\textrm{span}_{\M_1}(y_t)\neq \textrm{span}_{\M_1}(y_{t'})$ for any $t \neq t'$.
On the other hand, since $\M_1$ is \gkgood, there exists $Y\subseteq {E}$ such that $|Y|\leq g(k)$ and $\textrm{span}_{\M_1}(X-x)=\bigcup_{y \in Y} \mathrm{span}_{\M_1}(y)$.
Therefore, there exists at least one $y_t$ such that $y_t\not\in \textrm{span}_{\M_1}(X-x)$.
Hence we have $X-x+y_t\in \I_1$, meaning that $X-x+y_t\in \I_0\cap \I_1$ and $w(y_t) \ge w(x)$.
Thus, $y$ satisfies the conditions in (SingleEXC)'. 

Moreover, it is clear that the algorithm runs in polynomial time and that the size of $R$ is at most $k^2 (g(k)+1)$. \end{proof}

%% file: matroidclass.tex
\section{Matroid Classes}
\label{sec:matroidclass}

In this section, we list the matroid classes dealt with in this paper.

\paragraph{Uniform matroids.}
A matroid $\mathcal{M} = (E, \mathcal{I})$ is called a {\em uniform matroid} if 
there exists a positive integer $c$ such that $\I$ consists of all sets of size at most $c$. 

\paragraph{Partition matroids.}
A matroid $\mathcal{M} = (E, \mathcal{I})$ is called a {\em partition matroid} if 
there exist a partition $(E^1, \ldots, E^{\ell})$ of $E$ and a function $c: [\ell] \to \mathbb{Z}_{\geq 0}$ such that
\begin{equation*}
\I = \{ I \subseteq E \mid |I \cap E^j| \leq c(j) \text{ for any } j \in [\ell] \}.
\end{equation*}
If $c(j) = 1$ for any $j \in [\ell]$, then it is called a {\em simple partition matroid}. 
Uniform matroids are partition matroids.

\paragraph{Graphic matroids.}
A matroid $\mathcal{M} = (E, \mathcal{I})$ is called a {\em graphic matroid} if 
there exists a graph $G = (V, E)$ such that $F \subseteq E$ is an independent set of $\M$ if and only if $F$ forms a forest in $G$. 
Simple partition matroids are graphic where a graph consists of independent parallel edges.

\paragraph{Cographic matroids.}
A matroid $\M = (E, \I)$ is called a {\em cographic matroid}
if there exists a graph $G=(V, E)$ such that $F \subseteq E$ belongs to $\I$ 
if and only if $G - F$ has the same number of connected components as $G$. 

\paragraph{Transversal matroids.}
Let $G = (U, W; E)$ be a bipartite graph with a bipartition $(U, W)$. 
The {\em transversal matroid} represented by $G$ is a matroid $\M = (U, \I)$ over $U$ such that 
$X \subseteq U$ is in $\I$ if and only if $G$ has a matching that covers $X$. 
Partition matroids are known to be transversal.

\paragraph{Laminar matroids.}
Let $E$ be a finite set. 
A set family $\mathcal{L} \subseteq 2^E$ is called a {\em laminar family} over $E$ if 
for every two sets $A$ and $B$ in $\mathcal{L}$ with $A \cap B \neq \emptyset$, 
either $A \subseteq B$ or $B \subseteq A$ holds. 
A matroid $\mathcal{M} = (E, \mathcal{I})$ is called a {\em laminar matroid} if 
there exist a laminar family $\mathcal{L}$ over $E$ and a function $c \colon \mathcal{L} \to \mathbb{Z}_{\ge 0}$
such that 
$$
\mathcal{I} = \{ I \subseteq E \mid |I \cap A| \le c(A) \mbox{ for any } A \in \mathcal{L} \}. 
$$
By definition, partition matroids are laminar.

\paragraph{Linear matroids.}
A matroid $\M = (E, \I)$ is said to be {\em linearly representable} or a {\em linear matroid}
if there exists a matrix $A$ over some field whose columns are indexed by $E$ such that 
$I \in \I$ if and only if the columns indexed by $I$ are linearly independent. 
It is known that graphic, cographic, transversal and laminar matroids are all linearly representable.